\documentclass[11pt,a4paper]{article}
\hoffset -10.6mm
\voffset -2cm
\textwidth 15.5cm
\textheight 24cm  
\usepackage[sc,noBBpl]{mathpazo}
\usepackage[mathscr]{eucal}
\usepackage{amsmath,amsfonts,amssymb,amsthm}
\usepackage{graphicx}
\usepackage{mathtools}
\usepackage{tgpagella}
\linespread{1.05} 
\usepackage[usenames]{color}
\theoremstyle{plain}
\newtheorem{theorem}{Theorem}
\newtheorem{lemma}[theorem]{Lemma}

\newtheoremstyle{note}{\topsep}{\topsep}{\slshape}{}{\scshape}{}{ }{}
\theoremstyle{note}

\newtheorem{remark}[theorem]{Remark}
\newtheoremstyle{warning}{\topsep}{\topsep}{\color{Blue}\small\slshape}{}{\color{Red}\scshape}{.}{ }{}
\theoremstyle{warning}

\numberwithin{equation}{section}
\numberwithin{theorem}{section}

%
%




%

\newcommand\bB{{\mathbf B}}

%
%

%

%
%
%

%

%
%
%

\newcommand\cD{{\mathcal D}}

\newcommand\cG{{\mathcal G}}

\newcommand\cN{{\mathcal N}}

\newcommand\cT{{\mathcal T}}

%
%

\newcommand\scN{{\mathscr N}}

%

\newcommand\mvector{\boldsymbol}

\newcommand\vv{\mvector{v}}

\newcommand\vx{\mvector{x}}

\newcommand\vA{\mvector{A}}
\newcommand\vB{\mvector{B}}

\newcommand\vM{\mvector{M}}

\newcommand\vX{\mvector{X}}

\newcommand\vxi{\mvector{\xi}}

\newcommand\vGamma{\mvector{\Gamma}}

\newcommand\vvarphi{\mvector{\varphi}}

%
%
%
%
\newcommand\field{\mathbb}

\newcommand\bbS{\mathbb{S}}

\newcommand\C{\field{C}}
\newcommand\Z{\field{Z}}

\newcommand\N{\field{N}}

\newcommand\T{\field{T}}

\newcommand\ord{\operatorname{ord}}
\newcommand\grad{\operatorname{grad}}



\newcommand\rmd{\mathrm{d}}

\newcommand\rmi{\mathrm{i}\mspace{1mu}}

\newcommand\Dt{\frac{\mathrm{d}\phantom{t} }{\mathrm{d}\mspace{1mu}
t}}
\newcommand\DDt{\frac{\mathrm{d}^2\phantom{t} }{\mathrm{d}\mspace{1mu}
t^2}}

\newcommand\Dz{\frac{\mathrm{d}\phantom{z} }{ \mathrm{d}z}}
\newcommand\DDz{\frac{\mathrm{d}^2\phantom{z} }{ \mathrm{d}z^2}}

%
%

%
%

%
%

%
%

%
\newcommand\mtext[1]{\quad\text{#1}\quad}

\title{Non-integrability of flail  triple pendulum}
\author{
Maria Przybylska \\
Institute of Physics, University of Zielona G\'ora, \\
 Licealna 9, PL-65--417,  Zielona G\'ora, Poland
\\ e-mail: M.Przybylska@proton.if.uz.zgora.pl
\\ and \\
Wojciech Szumi\'nski\\
Institute of Physics, University of Zielona G\'ora, \\
 Licealna 9, PL-65--417,  Zielona G\'ora, Poland \\
e-mail: uz88szuminski@gmail.com 
}
\begin{document}
\mathtoolsset{%
mathic,centercolon%
}
\maketitle

\date{\small AMS Subject Classification: 70H07; 70H12;  70F07 }
\maketitle

\abstract{ We consider a special type of triple pendulum with two
  pendula attached to end mass of another one. Although we consider
  this system in the absence of the gravity, a quick analysis of of
  Poincar\'e cross sections shows that it is not integrable.  We give
  an analytic proof of this fact analysing properties the of
  differential Galois group of variational equation along certain
  particular solutions of the system.  }

\section{Introduction}
In this paper we study dynamics of a special type of triple pendulum
which we called a flail.  It consists of three pendula. The first one
is attached to a fixed point, and to its end mass the other two pendula
are joined.  The lengths and the masses of pendula are $l_1, l_2, l_3$, and
$m_1, m_2, m_3$, respectively, see Fig.~\ref{fig:lamane}.  We
consider this system in the absence of gravity. Similar problem of
dynamics of a simple triple pendulum in the absence of gravity field was
analysed in \cite{Salnikov:06::}.
\begin{figure}[h]
  \centering \includegraphics[width=0.38\textwidth]{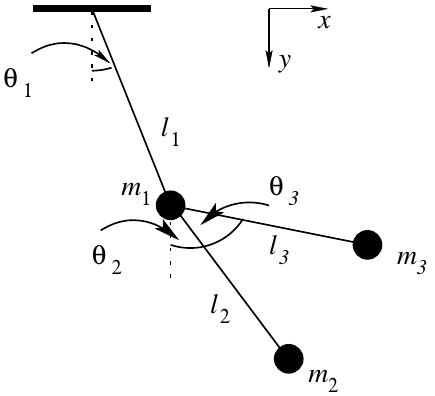}
  \caption{\footnotesize \label{fig:lamane}Geometry of the flail
    pendulum}
\end{figure}

The configuration space of our system is a torus $\T^3=\bbS^1\times
\bbS^1\times \bbS^1$ with local coordinates
$(\theta_1,\theta_2,\theta_3) \mod 2\pi$.  We chose the point of
suspension of the first pendulum as the origin, and angles are
measured from vertical line as it is shown in
Fig.~\ref{fig:lamane}. The Lagrange function has the following form
\begin{equation}
  \begin{split}
    L&=\dfrac{1}{2}\left(l_1^2 \mu\dot\theta_1^2 + l_2^2 m_2
      \dot\theta_2^2 + l_3^2 m_3 \dot\theta_3^2\right) +
    l_1 l_2 m_2 \cos(\theta_1-\theta_2)\dot\theta_1\dot\theta_2\\
    &+l_1 l_3 m_3
    \cos(\theta_1-\theta_3)\dot\theta_1\dot\theta_3,
      \end{split}
\end{equation} 
where $ \mu =m_1 + m_2 + m_3$.
Let us remark that the dynamics of this system can be interpreted as a
geodesic motion on torus $\T^3$.  The system has $\bbS^1$ symmetry.
This is why the Lagrange function depends on differences of
angles only. Hence, it is reasonable to introduce new variables defined by
\begin{equation}
  \gamma_1=\theta_2-\theta_1,\qquad \gamma_2=\theta_3-\theta_2,\qquad
  \gamma_3=\theta_1
  \label{eq:reduk}
\end{equation} 
and corresponding momenta
\[
\begin{split}
  &\pi_1=(l_2^2 m_2 + l_3^2 m_3)\dot\gamma_1+l_3^2
  m_3\dot\gamma_2+[l_2^2 m_2 + l_3^2 m_3 + l_1 l_2 m_2 \cos\gamma_1 +
  l_1 l_3 m_3 \cos(\gamma_1 +
  \gamma_2)]\dot\gamma_3,\\
  &\pi_2=l_3^2 m_3\dot\gamma_1+l_3^2 m_3\dot\gamma_2+l_3 m_3 [l_3 +
  l_1
  \cos(\gamma_1 + \gamma_2)]\dot\gamma_3,\\
  &\pi_3=[l_2^2 m_2 + l_3^2 m_3 + l_1 l_2 m_2 \cos\gamma_1 + l_1 l_3
  m_3 \cos(\gamma_1 + \gamma_2)]\dot\gamma_1+l_3 m_3 [l_3 + l_1
  \cos(\gamma_1 +
  \gamma_2)]\dot\gamma_2\\
  &+[l_2^2 m_2 + l_3^2 m_3 + l_1^2\mu + 2 l_1 l_2 m_2 \cos\gamma_1 +2
  l_1 l_3 m_3 \cos(\gamma_1 + \gamma_2)]\dot\gamma_3.
\end{split}
\]
Now, $\gamma_3$ is a cyclic variable, and $\pi_3$ is a first integral.
Thus, after this reduction, the system has two degrees of freedom and
its dynamics is determined by Hamiltonian
\begin{equation}
  \begin{split}
    H&= [l_1^2\mu (l_3^2 m_3 (\pi_1 - \pi_2)^2 + l_2^2 m_2 \pi_2^2) +
    l_2^2 l_3^2 m_2 m_3 (\pi_1 -c)^2\\
    & - l_1 (l_1 l_2^2 m_2^2 \pi_2^2 \cos\gamma_1^2 + l_3 m_3 \cos(
    \gamma_1 + \gamma_2) (2 l_2^2 m_2 \pi_2 (c-\pi_1)\\
    & + l_1 l_3 m_3 (\pi_1 - \pi_2)^2 \cos(\gamma_1 + \gamma_2)) - 2
    l_2 l_3 m_2 m_3 (\pi_1 - \pi_2) \cos\gamma_1 (l_3
    (\pi_1 -c) \\
    &+ l_1 \pi_2 \cos(\gamma_1 + \gamma_2)))]/[2 l_1^2 l_2^2 l_3^2 m_2
    m_3 B],
  \end{split}
  \label{eq:hamfin} 
\end{equation} 
where
\begin{equation*}
  \label{eq:2}
  B:=\mu -m_2 \cos^2\gamma_1 - m_3 \cos^2(\gamma_1 + \gamma_2), 
\end{equation*}
and $c:=\pi_3$ is a parameter.
\begin{figure}[p]
  \begin{center}
    \includegraphics[scale=0.165]{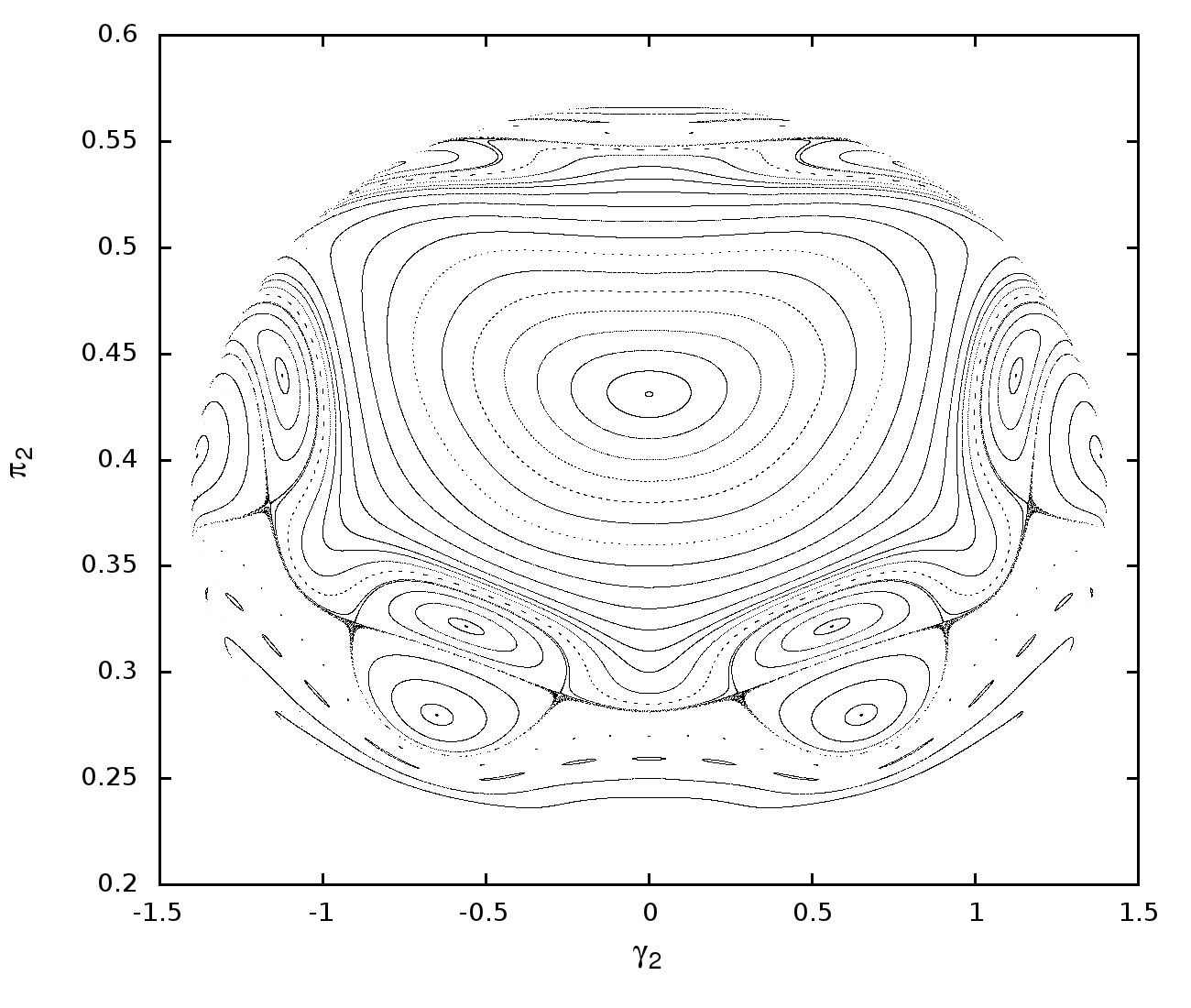} \includegraphics[scale=0.165]{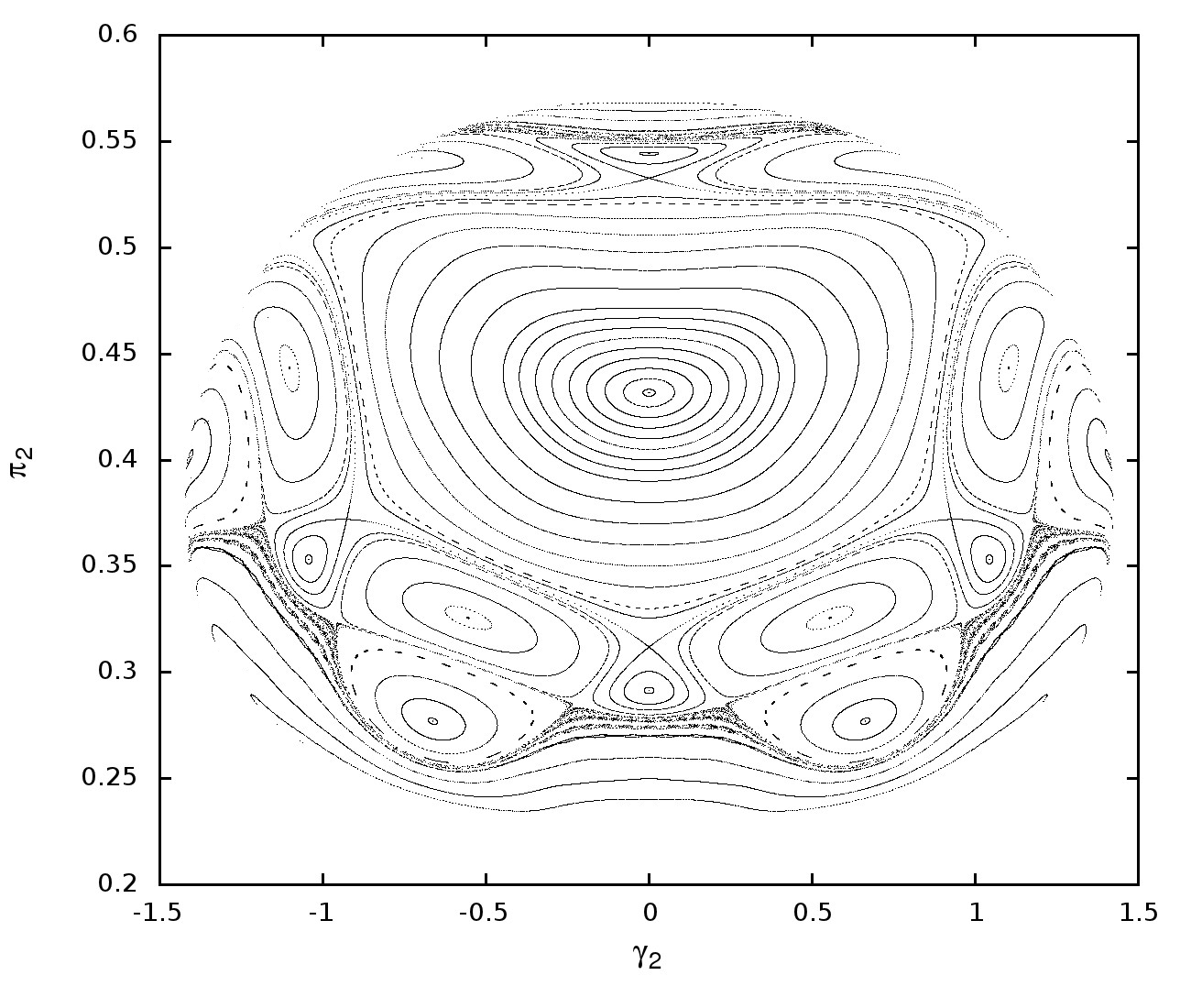}
\caption{\small The Poincar\'e  sections for $E = 0.01$ (on the left) and $E
= 0.01005$ (on the right)
\label{l1}}
\end{center}
\end{figure}

\begin{figure}[tp]
  \begin{center}
    \includegraphics[scale=0.165]{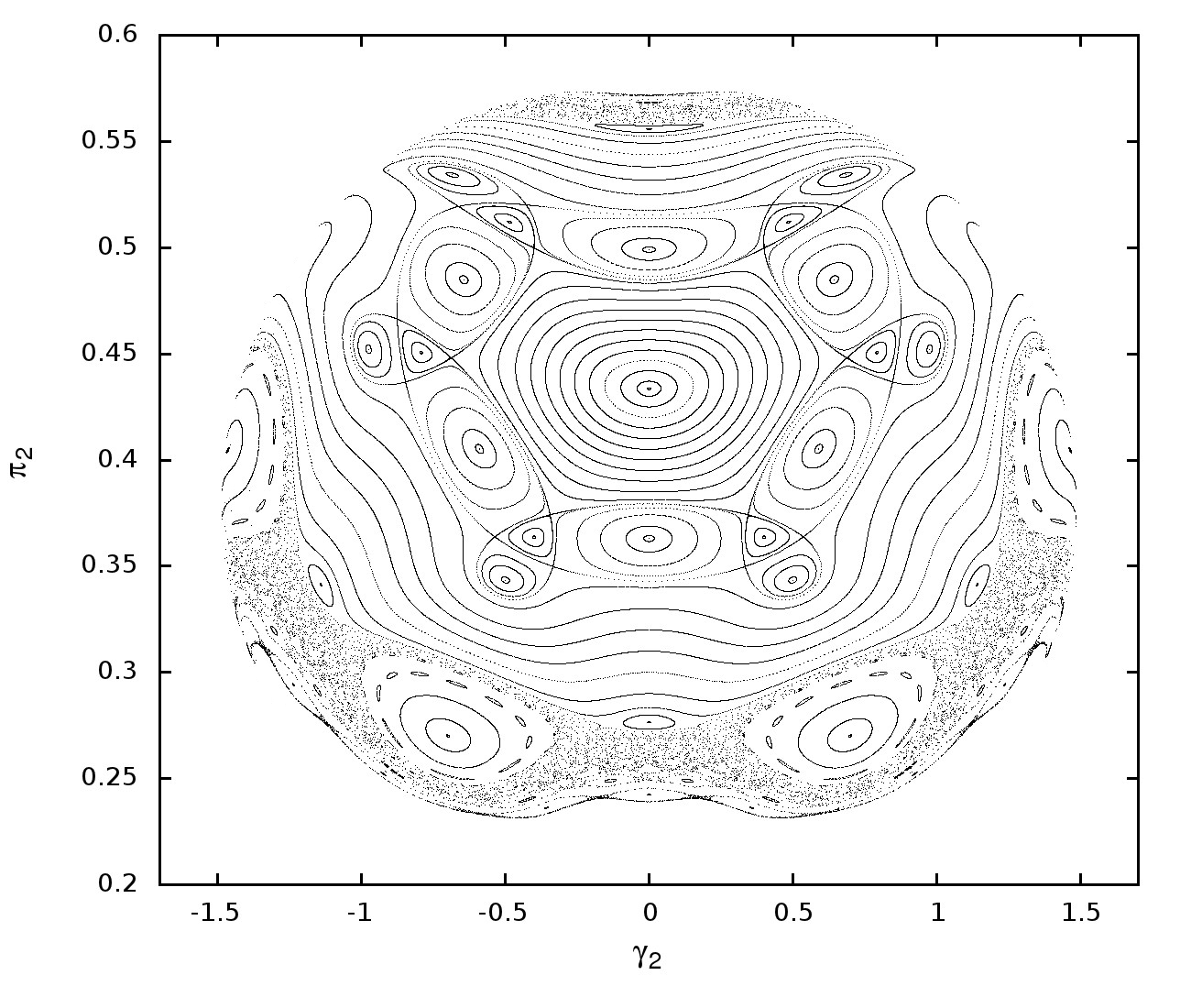} \includegraphics[scale=0.165]{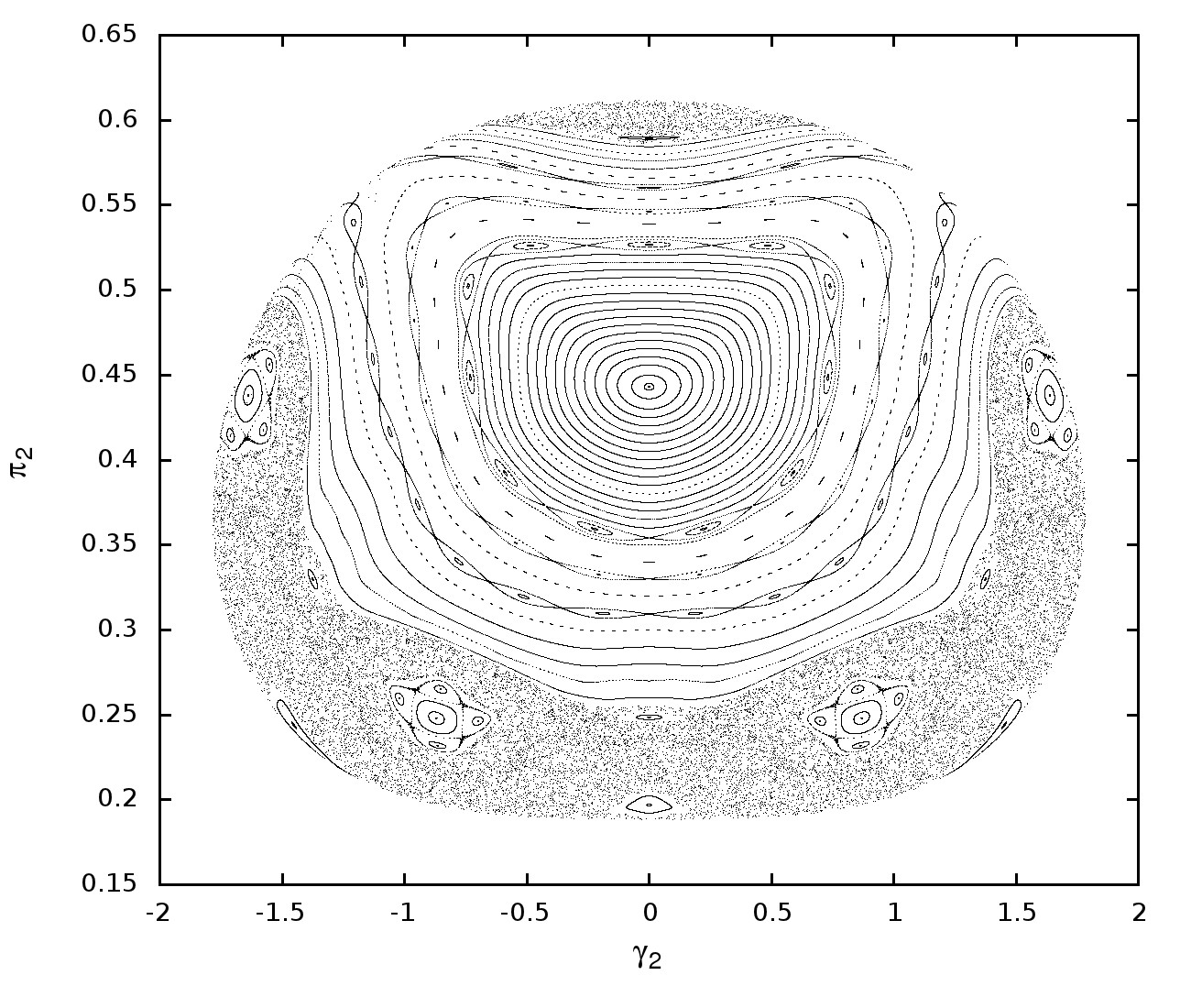}
    \caption{\small The Poincar\'e sections for $E = 0.0102$ (on the
      left) and $E = 0.011$ (on the right) \label{l3}}
  \end{center}
\end{figure}

\begin{figure}[htp]
  \begin{center}
    \includegraphics[scale=0.165]{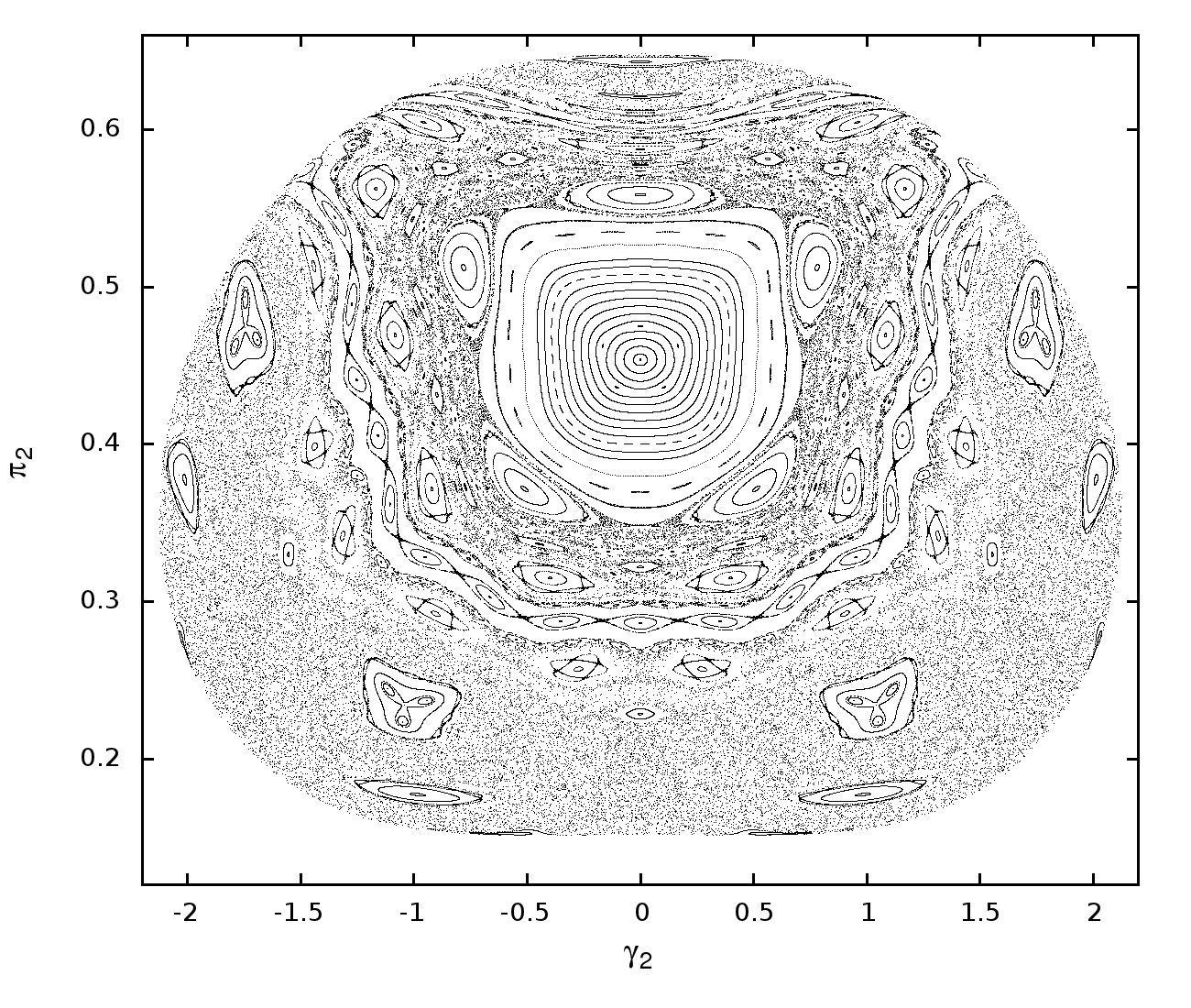}
    \includegraphics[scale=0.165]{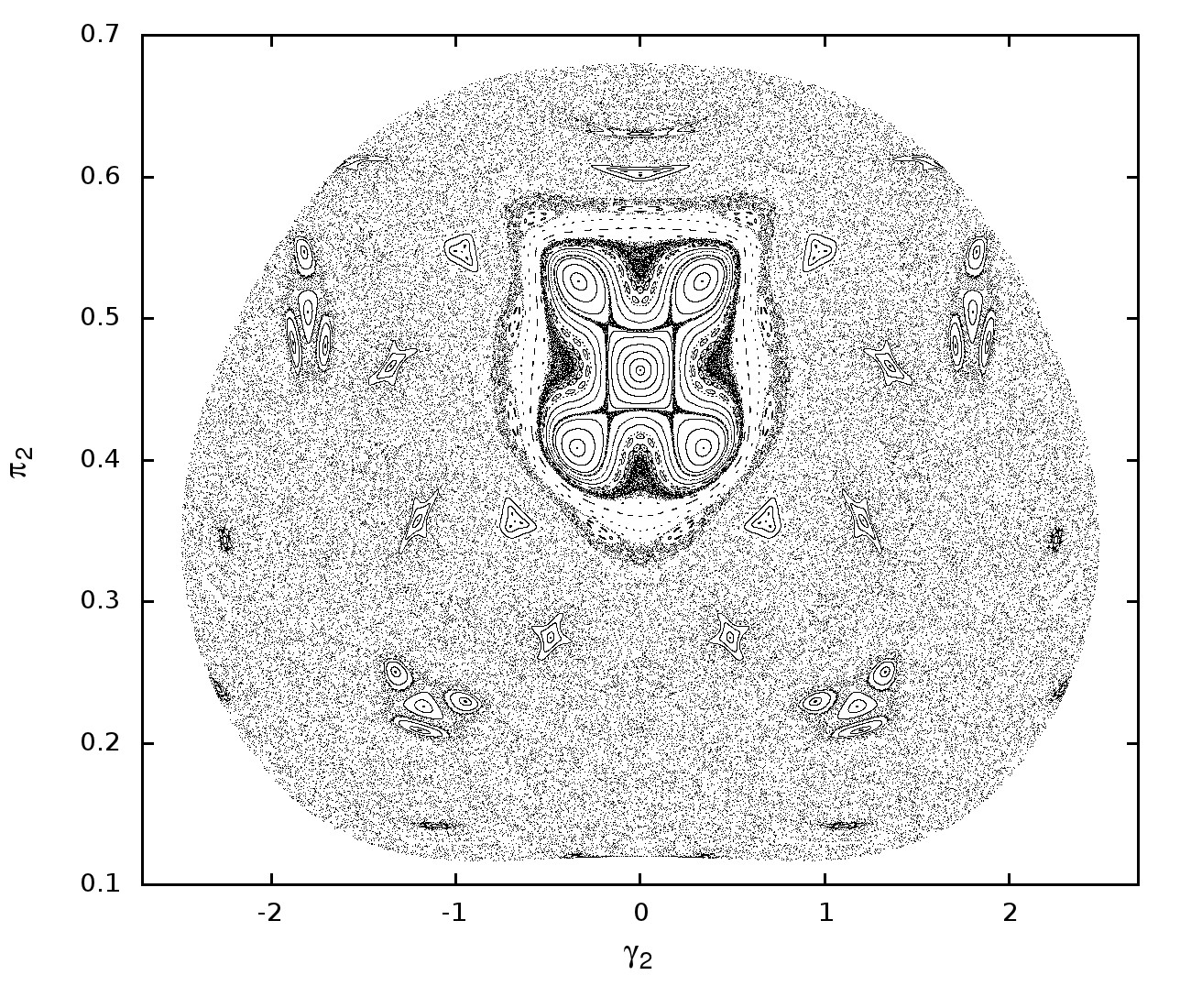}
    \caption{\small The Poincar\'e sections for $E = 0.012$ (on the
      left) and $E = 0.013$ (on the right) \label{l5}}
  \end{center}
\end{figure}

The aim of this paper is an analysis of integrability of this reduced
system.  To get an idea about the complexity of dynamics we made
several Poincar\'e cross sections. In Figures~\ref{l1}, \ref{l3}, and
\ref{l5} we show such cross sections. We take $\gamma_1=0$ and $
\pi_1>0$ as the cross section plane. The phase portraits are presented
in $(\gamma_2, y_2)$ plane. The cross sections are made for the
following values of parameters:
\begin{equation}
  \pi_3 = c = 1,\quad  m_1 = 1,\quad  m_2 = 3,\quad  m_3 = 2,\quad  l_1 = 1,\quad  l_2 = 2,\quad 
  l_3 = 3.
  \label{eq:paramy}
\end{equation}
Figures are ordered according to increasing energy of the system.
They show that, for chosen values of parameters, the system is not
integrable.  The main problem considered in this paper is to prove that
for a wide range of the parameters the system is in fact
non-integrable.  Our main result we formulate in the following
theorem.
\begin{theorem}
  If the reduced flail system governed by Hamiltonian
  \eqref{eq:hamfin} has parameters satisfying
  \begin{itemize}
  \item  $m_1\neq 0$, $c\neq 0$, $l_2\neq l_3$,  and  $m_2l_2=m_3l_3$; or
  \item $m_2=m_3$, $l_2=l_3$, and $c=0$,
  \end{itemize}
  then it is nonintegrable in the class of meromorphic functions of
  coordinates and momenta.
  \label{thm:main}
\end{theorem}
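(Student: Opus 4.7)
The standard route is Morales--Ramis: if the reduced system \eqref{eq:hamfin} were meromorphically Liouville integrable, the identity component $G^{0}$ of the differential Galois group of the variational equation along every non-equilibrium particular solution would be abelian. Accordingly, for each of the two parameter regimes I would (i) exhibit an explicit invariant symplectic submanifold $\mathcal{N}$ of the reduced phase space, carrying a one-parameter family of non-stationary orbits integrable by quadrature, (ii) write down the normal variational equation (NVE) along such an orbit and convert it, by an appropriate change of independent variable, into a second-order linear ODE with rational coefficients on $\CPOne$, and (iii) verify, via Kovacic's algorithm, that the Galois group of this NVE forces $G^{0}$ to be non-abelian.

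\textbf{Locating the invariant surfaces.} The two algebraic conditions in Theorem~\ref{thm:main} are precisely the identities needed for such a surface to exist. Under $m_{2}l_{2}=m_{3}l_{3}$, $l_{2}\neq l_{3}$, $c\neq 0$, I expect a co-oscillation manifold of the two outer pendula, of the form $\{\gamma_{2}=\gamma_{2}^{*},\;\pi_{2}=\lambda(\pi_{1},c)\}$ for a specific $\gamma_{2}^{*}\in\{0,\pi\}$ and a specific affine $\lambda$, on which the reduced dynamics collapses to a one-degree-of-freedom system in $(\gamma_{1},\pi_{1})$; one checks that $m_{2}l_{2}=m_{3}l_{3}$ is exactly what makes $\dot{\gamma}_{2}$ and $\dot{\pi}_{2}$ vanish identically on the candidate surface. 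Under $m_{2}=m_{3}$, $l_{2}=l_{3}$, $c=0$ the $\Z_{2}$ exchange symmetry of the outer pendula, combined with the reflection that absorbs $c=0$, has a fixed submanifold on which the two outer pendula move symmetrically with respect to the first, again reducing to one degree of freedom. In both cases the resulting one-DOF Hamiltonian can be integrated by an elliptic or trigonometric quadrature, giving an algebraic dependence $\cos\gamma_{1}=\phi(t)$.

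\textbf{Rationalising and analysing the NVE.} Linearising Hamilton's equations transversely to $\mathcal{N}$ yields a $2{\times}2$ NVE. Using the explicit solution on $\mathcal{N}$, the change of independent variable from $t$ to $z:=\cos\gamma_{1}$ rationalises its coefficients; a standard reduction to normal form then produces a scalar equation $y''=r(z)y$ with $r\in\C(z)$, whose singular points and local exponents I would read off from the denominator $B$ of \eqref{eq:hamfin} restricted to $\mathcal{N}$. Since the monodromy of $y''=ry$ lies in $\SLtwoC$, $G^{0}$ is abelian if and only if the equation admits a Liouvillian solution, and this is decided in finite time by Kovacic's algorithm.

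\textbf{Main obstacle and conclusion.} Cases~1 and 2 of Kovacic's algorithm are usually straightforward to exclude from the exponent data; the delicate step is Case~3, corresponding to finite primitive subgroups of $\SLtwoC$, where one must rule out polynomial solutions of prescribed degree of the second, fourth or sixth symmetric power of the NVE. This is where I expect the main difficulty: the hypotheses $l_{2}\neq l_{3}$ (first case) and $c=0$ (second case) must enter precisely to block accidental coefficient cancellations that would otherwise enlarge the solution space, and verifying this may require examining a finite number of sub-cases tied to the degree bounds produced by the algorithm. Once all three Kovacic cases are excluded, $G^{0}=\SLtwoC$ is non-abelian, and Morales--Ramis yields the asserted meromorphic non-integrability.
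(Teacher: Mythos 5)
Your overall scaffolding (Morales--Ramis, a particular solution on an invariant manifold, rationalisation of the normal variational equation, Kovacic) is exactly the paper's, but the proposal stops at the level of a roadmap, and at the two places where you do commit to specifics you miss what actually makes the argument work. First, the invariant set in the regime $m_2l_2=m_3l_3$, $l_2\neq l_3$, $c\neq 0$ is not a ``co-oscillation'' surface with $\gamma_2$ frozen at $0$ or $\pi$: after a non-canonical linear change of variables the paper finds the one-dimensional invariant set $x_1=y_1=0$, $y_2=-cl_3/(l_2-l_3)$, which in the original coordinates is $\gamma_2=-2\gamma_1$ with \emph{both} momenta constant, i.e.\ a uniform rotation $\dot\gamma_1=\mathrm{const}$; there is no elliptic quadrature, and $z=\cos\gamma_1$ is simply $\cos(\omega t)$. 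Your guess for the symmetric case ($\gamma_2=-2\gamma_1$ as the fixed set of the outer-pendula exchange composed with reflection) is essentially right.

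The more serious gap is in the Kovacic step. You predict that Cases I and II are routine and that the crux is excluding Case III (finite primitive groups) via symmetric-power computations, with the hypotheses $l_2\neq l_3$ and $c=0$ ``blocking accidental cancellations.'' The paper's decisive observation is the opposite: the normal form $w''=r(z)w$ has singular points ($z=\pm\delta$) at which the exponent difference is \emph{zero}, so the local solutions contain a logarithm and the local monodromy is a non-diagonalisable matrix of infinite order. This kills the finite and dihedral cases (Cases II and III) at once, leaving only Case I, which reduces to a single candidate with $d(e)=0$, $P=1$, and an immediate contradiction. Without this observation your plan runs straight into the Case III computation that the paper itself reports as intractable --- indeed that is exactly why $m_1\neq 0$ appears in the hypotheses: for $m_1=0$ the logarithm argument is unavailable and the authors can only reduce to ``finite subgroup or $\mathrm{SL}(2,\C)$,'' leaving that sub-case open. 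Your proposal assigns no role to $m_1\neq 0$ at all. Finally, a complete proof must also handle the degenerate parameter values where singular points coalesce ($\delta=1$, $\sigma\delta=2$, $\delta=1/\alpha$, $\alpha=1$), which the paper treats separately, in the extreme confluent cases by reducing to a Riemann $P$ equation and invoking Kimura's theorem; these sub-cases are absent from your outline.
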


The plan of the paper is the following. Section~\ref{sec:proof} contains the
non-integrability proof and is divided into two subsections. In
Section~\ref{sec:concl} some remarks are given.  Explicit forms of some
complicated expressions are given in Appendix~\ref{sec:appendix}. For
convenience of readers in Appendix~\ref{app:kov} basic information about 
linear second order differential equation with rational
coefficients and Kovacic algorithm are given.
\section{Non-integrability proof}
\label{sec:proof}
In this section we prove Theorem~\ref{thm:main}   
using  the Morales-Ramis theory, see \cite{Baider:96::b,Morales:99::c}.  

We formulate the main theorem of this theory in the following simple
settings. Let us consider a complex Hamiltonian system. We assume that the underlying phase space is 
$\mathbb{C}^{2n}$, which is considered as a linear symplectic space 
 equipped with the canonical symplectic form
\[ 
\omega = \sum_{i=1}^n \rmd
q_i\wedge \rmd p_i,
\] 
where $ \vx=(q_1,\ldots, q_n,p_1,\ldots,p_n)$
are the canonical coordinates.  Let $H:\mathbb{C}^{2n} \rightarrow
\mathbb{C}$ be a holomorphic Hamiltonian,  and
\begin{equation}
\label{eq:ds}
   \Dt \vx = \vv_H(\vx),\quad
\vv_H(x)=\mathbb{I}_{2n}\nabla_{\vx} H,
 \qquad \vx\in\mathbb{C}^{2n}, 
\quad t\in\mathbb{C},
\end{equation}
the associated Hamilton equations. Here $\mathbb{I}_{2n}$ denotes  $2n\times 2n$
symplectic  unit matrix.

Let  $t\mapsto\vvarphi(t)\in\C^{2n}$ be 
a non-equilibrium solution of \eqref{eq:ds}. The maximal analytic
continuation of $\vvarphi(t)$ defines a Riemann surface $\vGamma$ with
$t$ as a local coordinate. Variational equations  along
$\vvarphi(t)$ have the form 
\begin{equation}
\label{eq:ve}
 \Dt\vxi = \vA(t) \vxi, \qquad  
\vA(t) = \dfrac{\partial \vv_H}{\partial \vx}
(\vvarphi(t)). 
\end{equation}
We can attach to this equations  the differential Galois group $\cG$.
Roughly speaking, $\cG$ is an algebraic subgroup of
$\mathrm{GL}(2n,\C)$, which preserves  polynomial relations
between solutions of \eqref{eq:ve}.  As a linear algebraic group,
among other things, it is a union of a finite number of disjoint
connected components. One of them, containing the identity, is called
the identity component of $\cG$, and is denoted by $\cG^0$.  For a
precise definition of the differential Galois group and differential
Galois theory see e.g.,
\cite{Kaplansky:76::,Beukers:92::,Magid:94::,Put:03::}.  
It appears that the integrability of the considered system manifests
itself in properties of the differential Galois group of the
variational equations.  For Hamiltonian system and integrability in
the Liouville sense this relation is particularly elegant.  For these
systems $\cG$ is an algebraic subgroup of $\mathrm{Sp}(2n,\C)$. In
nineties of XX century Morales-Ruiz and Ramis showed that the
integrability in the Liouville sense imposes a very restrictive
condition on $\cG^0$ \cite{Baider:96::b,Morales:01::b1,Morales:99::c}.
\begin{theorem}[Morales-Ruiz and Ramis]
\label{thm:MR}
  Assume that a Hamiltonian system is meromorphically integrable in
  the Liouville sense in a neigbourhood of a phase curve
  $\vGamma$. Then the identity component of the differential Galois
  group of NVEs associated with $\vGamma$   is Abelian.
\end{theorem}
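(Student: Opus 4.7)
The plan is to convert meromorphic Liouville integrability into an algebraic constraint on the differential Galois group $\cG$ of \eqref{eq:ve}, using the Picard-Vessiot correspondence between rational first integrals of a linear differential system and invariants of its Galois group. By hypothesis there exist meromorphic functions $F_1 = H, F_2, \ldots, F_n$ in a neighbourhood of $\vGamma$ that are functionally independent and pairwise in involution. First I would linearise along $\vGamma$: identifying a tubular neighbourhood of $\vGamma$ with a neighbourhood of the zero section of its symplectic normal bundle, I would Taylor-expand each $F_k$ in the fibre coordinate $\vxi$. Because $F_k$ is a first integral of the full Hamiltonian flow, every homogeneous component of the expansion is a first integral of \eqref{eq:ve}. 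Taking the lowest non-trivial component $f_k$ of each $F_k$, the identity $\{F_j, F_k\}=0$ descends to the linear symplectic structure on the normal bundle, producing $\{f_j, f_k\}=0$, while functional independence of the $F_k$'s propagates (after a possible jet-refinement) to functional independence of the $f_k$'s. Clearing denominators yields $n$ rational-on-fibres, pairwise Poisson-commuting, functionally independent first integrals of \eqref{eq:ve}.

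The second half is the Galois-theoretic translation. By Picard-Vessiot theory, the field of rational first integrals of \eqref{eq:ve} coincides with the field of rational invariants of $\cG$ acting on the fibre $\C^{2n}$, so the $f_k$'s are invariants of the identity component $\cG^0 \subseteq \mathrm{Sp}(2n,\C)$. To conclude abelianness I would invoke the following structural fact about algebraic groups: a connected algebraic subgroup of $\mathrm{Sp}(2n,\C)$ whose rational invariant ring contains $n$ functionally independent, pairwise involutive elements must be abelian. The reason is that each involutive invariant produces a Hamiltonian vector field commuting with the Lie algebra of the group; functional independence forces the span of these vector fields to be a Lagrangian subspace of $\C^{2n}$, and the Lie algebra of $\cG^0 \subseteq \mathfrak{sp}(2n,\C)$ is then constrained to lie inside a Lagrangian subspace of $\mathfrak{sp}(2n,\C)$, which is automatically abelian.

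The main obstacle is the non-degeneracy of the linearisation step. A priori the lowest-order components $f_k$ of distinct first integrals may coincide, be constant, or be functionally dependent, so producing $n$ genuinely independent homogeneous integrals requires a graded, Ziglin-type refinement: one iteratively passes to higher-order jets of the $F_k$'s along $\vGamma$, using that functional independence of the $F_k$'s in a full neighbourhood of $\vGamma$ must survive in some level of the grading. A secondary technical point is the careful bookkeeping of the base differential field: the $F_k$'s are meromorphic near $\vGamma$ but may be singular on $\vGamma$ itself, so one must restrict to a Zariski open subset of $\vGamma$ where the chosen $f_k$'s are regular at a generic point and then extend invariants along the Galois action into the full Picard-Vessiot ring. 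Both issues are handled in the Morales-Ramis argument \cite{Morales:99::c} through the formalism of Darboux-type coordinates along the phase curve.
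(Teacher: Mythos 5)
The paper does not actually prove this statement: Theorem~\ref{thm:MR} is the foundational Morales--Ruiz--Ramis theorem, quoted as a black box with citations to \cite{Baider:96::b,Morales:01::b1,Morales:99::c}, so your attempt can only be measured against the proof in that literature. Your overall architecture does coincide with it: (i) a Ziglin-type jet lemma converting the $n$ meromorphic integrals in involution near $\vGamma$ into $n$ functionally independent first integrals of the variational equations, polynomial (hence rational) in the fibre variables and still in involution; (ii) the fact that the differential Galois group fixes rational first integrals of \eqref{eq:ve}; (iii) a key algebraic lemma stating that an algebraic subgroup of $\mathrm{Sp}(2n,\C)$ admitting $n$ functionally independent involutive rational invariants has abelian identity component. (A minor mismatch: the theorem as stated concerns the NVE, while you argue on the full VE; passing between the two requires the symplectic reduction by the flow direction and the level set of $dH$, which the literature also handles.)

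The genuine gap is in your justification of (iii). The phrase ``a Lagrangian subspace of $\mathfrak{sp}(2n,\C)$'' is not meaningful: $\mathfrak{sp}(2n,\C)$ carries no canonical symplectic structure, and its dimension $n(2n+1)$ need not even be even. What your tangency argument actually yields is that at a generic point $\xi$ every $Y\in\mathrm{Lie}(\cG^0)$ satisfies $Y\xi\in L_\xi$, where $L_\xi$ is the Lagrangian tangent space to the common level set of the invariants $f_1,\dots,f_n$; but $L_\xi$ varies from point to point, and vector fields tangent to a foliation have commutators again tangent to the foliation --- not zero --- so abelianness does not follow from tangency alone. The correct completion, which is the heart of the Morales--Ramis argument, goes through quadratic Hamiltonians: identify $Y\in\mathrm{Lie}(\cG^0)\subseteq\mathfrak{sp}(2n,\C)$ with the quadratic form $h_Y$ for which $Y=X_{h_Y}$. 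Invariance gives $\{h_Y,f_k\}=0$ for all $k$; by completeness of the involutive family, $X_{h_Y}$ is tangent to the generic fibres of $(f_1,\dots,f_n)$ while each $h_Z$, $Z\in\mathrm{Lie}(\cG^0)$, is locally constant on those fibres, whence $\{h_Y,h_Z\}=0$ identically; since $Y\mapsto h_Y$ is a Lie-algebra isomorphism onto the quadratic forms equipped with the Poisson bracket, $[Y,Z]=0$, and $\cG^0$ is abelian. Separately, you should treat the jet refinement in step (i) as what it is --- Ziglin's lemma, a genuine result requiring proof (one replaces the $F_k$ by polynomial combinations so that the leading jets become independent, which also preserves involution) --- rather than as bookkeeping handled by ``Darboux-type coordinates.''
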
    
In order to apply this theorem we need an effective method which
allows to determine properties of the differential Galois group of
linear equations.  In the investigated system variational equations
split into two subsystems of linear equations.  Each of these
subsystems can be transformed into equivalent second order equation
with rational coefficients. For such equation there exists the Kovacic
algorithm \cite{Kovacic:86::}, see also Appendix~\ref{app:kov}, which allows to
determine the
differential Galois group. We apply this algorithm in our considerations. 

\subsection{Case $m_2\neq m_3$ and $l_2\neq l_3$}
\label{sec:nierowne}
At first we make the following non-canonical transformation
\begin{equation}
  \begin{bmatrix}
    \gamma_1\\
    \gamma_2\\
    \pi_1\\
    \pi_2
  \end{bmatrix}
  =\vA
  \begin{bmatrix}
    x_1\\
    x_2\\
    y_1\\
    y_2
  \end{bmatrix},\qquad \vA=\begin{bmatrix}
    0&1&0&0\\
    1&-2&0&0\\
    0&0&1&1-\dfrac{l_2}{l_3}\\
    0&0&0&1
  \end{bmatrix}.
  \label{eq:trafononcan}
\end{equation}
After this transformation equations of motion take the form
\begin{equation}
  \dot\vx=\mathbb{J}_{2n}\nabla_{\vx}\widetilde{H},\qquad
  \vx=[x_1,x_2,y_1,y_2]^T,
  \label{eq:systfin}
\end{equation}
where
\begin{equation}
  \mathbb{J}_{2n}:=\bB\mathbb{I}\vB^T,\qquad \vB=\vA^{-1}, \quad
  \widetilde{H}(\vx):= H(\vA\vx).
\end{equation} 
The explicit form of Hamiltonian in these variable is following
\begin{equation}
  \label{eq:tH}
  \begin{split}
    &\widetilde H= \left[l_2^2 m_2 m_3\left (c l_3 + l_2 y_2 - l_3
        (y_1 + y_2)\right)^2 +
      l_1^2\mu \left(l_2^2 m_2 y_2^2 + m_3 (l_3 y_1 - l_2 y_2)^2\right) \right.\\
    & - l_1\left( m_3 \cos(x_1 - x_2) \left(2 l_2^2 m_2 y_2 \left(c
          l_3 + l_2 y_2 - l_3 (y_1 + y_2)\right)
        + l_1 m_3 (l_3 y_1 - l_2 y_2)^2 \cos(x_1 - x_2) \right) \right.\\
    & + 2 l_2 m_2 m_3 (-l_3 y_1 + l_2 y_2) \left(-l_2 y_2 + l_3 (-c +
      y_1 + y_2) +
      l_1 y_2 \cos(x_1 - x_2)\right) \cos x_2\\
    & + \left. \left.  l_1 l_2^2 m_2^2 y_2^2 \cos
        x_2^2\right)\right]/[2 l_1^2 l_2^2 l_3^2 m_2 m_3 (\mu - m_3
    \cos(x_1 - x_2)^2 - m_2 \cos x_2^2)].
  \end{split}
\end{equation} 
An explicit form of vector field
$\vv(\vx):=\mathbb{J}_{2n}\nabla_{\vx}\widetilde{H}$ is given in
Appendix~\ref{sec:appendix}, see equation~\eqref{eq:ukk}. These
equations are still Hamiltonian but written in non-canonical
variables.

Let us assume that the following conditions
\begin{equation}
  \label{eq:cond}
  m_2l_2=m_3l_3, \qquad l_2\neq l_3, 
\end{equation}
are fulfilled.  Then, system~\eqref{eq:systfin} has an invariant
manifold
\[
\cN=\{(x_1,x_2,y_1,y_2)\in\C^4\ |\ x_1=y_1=0,\, y_2=-cl_3/(l_2-l_3)
\}.
\]
In fact, restricting the right hand sides of
system~\eqref{eq:systfin} to $\cN$, and putting $ m_3={m_2l_2}/{l_3}
$, we obtain
\begin{equation}
  \dot x_1=0,\qquad \dot x_2=\dfrac{c}{l_2m_2(l_2-l_3)},\qquad \dot y_1=0,\qquad
  \label{eq:particular}
  \dot y_2=0. 
\end{equation} 
Hence, solving the above equations we obtain our particular solution
$\C\ni t\mapsto\vvarphi(t)$.

Let $\vX:=[X_1,X_2,Y_1,Y_2]^T$ denote variations of
$[x_1,x_2,y_1,y_2]^T$.  The variational equations along $\vvarphi(t)$
take the form
\begin{equation}
  \Dt\vX =\vA \vX, \qquad \vA=
  \begin{bmatrix}
    a_{11}&0&a_{13}&a_{14}\\
    a_{21}&0&a_{23}&a_{24}\\
    a_{31}&0&a_{33}&a_{34}\\
    a_{41}&0&a_{43}&a_{44}
  \end{bmatrix}.
  \label{eq:wariaty1}
\end{equation}
The explicit forms of entries $a_{ij}$ are given in
equation~\eqref{eq:wariaty1_entries} in Appendix~\ref{sec:appendix}.

We note that equations on $X_1$, $Y_1$ and $Y_2$ do not depend on
$X_2$, i.e., variable $X_2$ decouples from remaining
variables. Moreover one can observe that equations for $Y_1$ and for
$Y_2$ are dependent, see appropriate coefficients $a_{ij}$.  This fact
is connected with the existence of a first integral of variational
equations. The expansion of energy integral along the particular
solution gives the first integral of variational equations
\begin{equation}
  h:=\grad \widetilde H(\vvarphi)\cdot\vX=\dfrac{c}{l_2 l_3
    m_2(l_3-l_2)}[-l_3Y_1+(l_2+l_3)Y_2].
  \label{eq:variat}
\end{equation} 
At the level $h=0$ we have
\begin{equation}
  Y_2=\dfrac{l_3}{l_2+l_3}Y_1.
\end{equation}
Then, substituting this into the sub-system of variational equations
for $X_1$, $Y_1$ and $Y_2$ and eliminating $Y_1$ we obtain the
following second order linear equation for $X_1$
\begin{equation}
  \ddot X_1+a\dot X_1+bX_1=0,
  \label{eq:variaty}
\end{equation} 
with coefficients
\[
\begin{split}
  &a=[4 c (l_2 + l_3) (2 l_2 l_3 + l_1 (l_2 + l_3) \cos x_2) (l_1 l_3
  m_1 +
  l_1 (l_2 + l_3) m_2 + 2 l_2 l_3 m_2 \cos x_2) \sin x_2]\\
  &/[l_2 (l_2 - l_3) (4 l_2^2 l_3^2 m_2 +l_1^2 (l_2 + l_3) (l_3 m_1 +
  (l_2 + l_3) m_2) +
  4 l_1 l_2 l_3 (l_2 + l_3) m_2 \cos x_2) (2 l_3 m_1\\
  & + (l_2 + l_3) m_2 -
  (l_2 + l_3) m_2 \cos(2x_2))],\\
  &b=[2 c^2 (2 l_2 l_3 +l_1 (l_2 + l_3) \cos x_2) (4 l_1 l_2 l_3 (l_2
  + l_3) m_2 +
  (4 l_2^2 l_3^2 m_2 +l_1^2 (l_2 + l_3) (l_3 m_1\\
  & + (l_2 + l_3) m_2)) \cos x_2)]/[l_1 l_2^2 (l_2 - l_3)^2 m_2 (4
  l_2^2 l_3^2 m_2 +l_1^2 (l_2 + l_3)
  (l_3 m_1 + (l_2 + l_3) m_2)\\
  & + 4 l_1 l_2 l_3 (l_2 + l_3) m_2 \cos x_2) (2 l_3 m_1 + (l_2 + l_3)
  m_2 - (l_2 + l_3) m_2 \cos(2x_2))].
\end{split}
\]
In order to transform this equation into an equation with rational
coefficients we use the following transformation
\begin{equation}
  t\longmapsto z=\cos x_2(t).
  \label{eq:ratio}
\end{equation}
This change of variable together with transformations of derivatives
\[
\begin{split}
  &\Dt=\dot z\Dz,\qquad \DDt=(\dot z)^2\DDz+\ddot z \Dz,\\
  &\dot z=-\sin x_2\, \dot x_2=-\dfrac{c \sin
    x_2}{l_2m_2(l_2-l_3)},\qquad
  (\dot z)^2=\dfrac{c^2 (1-z^2)}{l_2^2m_2^2(l_2-l_3)^2},\\
  &\ddot z=-\cos x_2\, \dot x_2^2=-\dfrac{c^2
    z}{l_2^2m_2^2(l_2-l_3)^2},
\end{split}
\]
convert equation~\eqref{eq:variaty} into
\begin{equation}
  X''+p(z)X'+q(z)X=0,  \qquad X=X_1,
  \label{eq:zvar}
\end{equation} 
where the prime denotes the derivative with respect to $z$. The
explicit forms of coefficients $p(z)$ and $q(z)$ are the following
\[
\begin{split}
  & p=\dfrac{1}{\dot z^2}(\ddot z+\dot z a)=\dfrac{z}{z^2-1} +
  \dfrac{2\sigma z}{(z^2-1) \sigma - \alpha \gamma} - \dfrac{4
    \sigma}{ 4 + \sigma (4 z + \sigma + \alpha \gamma)},\\
  &q=\dfrac{b}{\dot z^2}=\dfrac{(\sigma z +2) (4 \sigma+z (4 + \sigma
    (\sigma + \alpha \gamma)))}{(z^2-1)[\sigma (z^2-1) - \alpha
    \gamma) (4 +\sigma (4 z + \sigma + \alpha \gamma)]},
\end{split}
\]
where we introduced notation
\[
\alpha=\dfrac{l_1}{l_2},\qquad \beta=\dfrac{l_1}{l_3},\qquad
\gamma=\dfrac{m_1}{m_2}, \qquad \sigma:=\alpha+\beta.
\]
These parameters take non-negative values, and we assume that they do
not vanish simultaneously.

Now, we make the classical Tschirnhaus change of dependent variable,
\begin{equation}
  \label{eq:tran}
  X = w \exp\left[ -\frac{1}{2} \int_{z_0}^z p(\zeta)\,  \rmd \zeta \right]
\end{equation}
which transforms \eqref{eq:zvar} into its reduced form
\begin{equation}
  \label{eq:normal}
  w'' = r(z) w, \qquad r(z) = -q(z) + \frac{1}{2}p'(z)  + \frac{1}{4}p(z)^2,
\end{equation}
with coefficient
\[
\begin{split}
  r(z)= & -\frac{3}{16}\left[\frac{1}{\left( z-1 \right)^{2}}
    +\frac{1}{\left( z+1 \right)^{2}}\right]
  -\frac{1}{4}\left[\frac{1}{\left( z-\delta \right)^{2}}
    +\frac{1}{\left( z+\delta\right)^{2}} -\frac{3}{\left( z-\rho
      \right)^2}\right] + \\& \frac{\sigma^3\delta^2 \left( 5z^2
      -\delta^2 -4\right) +4\sigma^2z \left( 4+\delta^2 -z^2 \right) +
    4\sigma\left( 20+\delta^2 -5z^2 \right) +64z}{32\sigma^2
    (z^2-1)(z^2-\delta^2)(z-\rho)},
\end{split}
\]
where
\begin{equation}
  \label{eq:sdr}
  \delta^2 :=1+\frac{\alpha\gamma}{\sigma}, 
  \qquad \rho:=- \frac{1}{\sigma}-\frac{1}{4}\delta^2\sigma.
\end{equation}
Let us notice that coefficient $r(z)$ depends only on two parameters
$\delta$ and $\sigma$. We assume that  both are positive and
$\delta\geq 1$.

If $\delta\neq 1$, and $\sigma\delta\neq 2$, then
equation~\eqref{eq:normal} has six pairwise different regular singular
points
\[
z_{1,2}=\pm 1,\qquad z_{3,4}=\pm\delta,\qquad z_5=\rho,\qquad
z_6=\infty.
\]
The respective differences of exponents at these points are following
\begin{equation}
  \label{eq:de}
  \Delta_1=\Delta_2=\frac{1}{2}, \qquad 
  \Delta_3=\Delta_4=0, \qquad 
  \Delta_5=2, \qquad \Delta_6=1.
\end{equation}
Now, we can prove the following result.
\begin{lemma}
 For
  \begin{equation}
    \delta\neq 1,  \mtext{and} \delta\sigma\neq2,
    \label{eq:except}
  \end{equation}
  the differential Galois group of equation \eqref{eq:normal} 
  is $\mathrm{SL}(2, \C)$.
  \label{lem:gennier}
\end{lemma}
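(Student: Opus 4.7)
The plan is to apply Kovacic's algorithm (recalled in Appendix~\ref{app:kov}) to the reduced variational equation~\eqref{eq:normal}. Under the hypotheses~\eqref{eq:except} the six points $z=\pm 1,\pm\delta,\rho,\infty$ are pairwise distinct and are the only singularities of $r(z)$, with exponent differences listed in~\eqref{eq:de}. Since $r\in\C(z)$, the differential Galois group $\cG$ is an algebraic subgroup of $\mathrm{SL}(2,\C)$, and by the dichotomy underlying Kovacic it suffices to rule out each of the first three cases; the remaining possibility is then $\cG=\mathrm{SL}(2,\C)$.

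For Case~1, I would first record, at each singular point $c$, the admissible local exponents $E_c$ entering the algorithm. These are $\{\tfrac14,\tfrac34\}$ at $z=\pm 1$, the singleton $\{\tfrac12\}$ at $z=\pm\delta$, and $\{-\tfrac12,\tfrac32\}$ at $z=\rho$, together with the analogous set at infinity read from the leading behaviour of $r(z)$ at $\infty$. For each choice $(e_c)$ I compute the integer $d=e_\infty-\sum_{c\ne\infty}e_c$ and retain only configurations with $d\in\Z_{\geq 0}$. For each surviving family I write the Riccati candidate $\omega=P'/P+\sum_c e_c/(z-c)$ with $P$ a polynomial of degree $d$, substitute it into $\omega'+\omega^2=r(z)$, and show by matching coefficients that no such $P$ exists; the hypotheses~\eqref{eq:except} are used here to keep all six singular points distinct and the principal parts of $r$ under control.

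Case~2 is analogous, now using the integer sets $E_c\subset\{2,2\pm 2\Delta_c\}$ specified by the algorithm. The very small exponent differences in~\eqref{eq:de} mean each $E_c$ has at most three elements, so the enumeration of candidate families is short; one again verifies that the associated polynomial-search fails. For Case~3 (finite primitive Galois group) I would argue directly from local monodromy: at $z=\pm\delta$ the two exponents coincide, so the local monodromy is either trivial or a unipotent of infinite order. Expanding $r(z)$ one term beyond the $(z\mp\delta)^{-2}$ contribution and using~\eqref{eq:except} to avoid degeneracies, one checks that the coefficient of $(z\mp\delta)^{-1}$ is non-zero; hence the local monodromy is non-trivially unipotent, $\cG$ is infinite, and Case~3 is excluded.

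The main obstacle is Case~1. The polynomial ansatz leads to a coefficient-matching system in the two free parameters $(\sigma,\delta)$, and one must verify that its solvability locus is disjoint from the open set cut out by~\eqref{eq:except}. The special loci $\delta=1$ and $\delta\sigma=2$ are precisely where two of the singular points collide ($\pm\delta$ with $\pm 1$, or $\rho$ with $-\delta$); in these degenerations the exponent bookkeeping changes and genuine reducible configurations may appear, which is exactly why they are excluded in the statement of the lemma.
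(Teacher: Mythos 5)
Your outline coincides with the paper's proof on the decisive step: in Case I of the Kovacic algorithm the exponent data \eqref{eq:de} leaves exactly one admissible family, it has $d=0$ so $P=1$, and substituting into \eqref{eq:P} yields $(\sigma z+2)\left[(\sigma^2\delta^2+4)z+4\sigma\right]=0$, impossible for $\sigma>0$. Two remarks on how you dispose of the other cases. At $z=\pm\delta$ the two local exponents are equal, so the logarithm in the second local solution \eqref{eq:w12} is \emph{automatic}: if both solutions were of the form $(z\mp\delta)^{1/2}\times(\text{holomorphic})$, the Wronskian of $w''=rw$, a nonzero constant, would equal $(z\mp\delta)(fg'-f'g)$ and hence vanish at $\pm\delta$. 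Your proposed test (nonvanishing of the coefficient of $(z\mp\delta)^{-1}$) is not the relevant criterion and is not needed. The paper exploits this logarithm more fully than you do: the local monodromy at $\pm\delta$ is $-1$ times a nontrivial unipotent, hence of infinite order \emph{and} non-diagonalisable, which simultaneously excludes the finite primitive groups (your Case 3) and every conjugate of the dihedral group (Case 2) by Lemma~\ref{lem:alg}, leaving only the triangular case to be tested.

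The one genuine gap is your Case 2. You assert that the enumeration of candidate families is short and that ``the polynomial-search fails,'' but the enumeration does \emph{not} come up empty: with $E_{\pm1}=\{1,2,3\}$, $E_{\pm\delta}=\{2\}$, $E_{\rho}=\{-2,2,6\}$ and $4\in E_{\infty}$, the family $e=(1,1,2,2,-2;4)$ has odd entries and gives $d=\tfrac12(4-4)=0$, so you would still have to verify that $P=1$ fails to satisfy the third-order auxiliary equation \eqref{eq:P2} for this family. That computation is presumably routine, but it is real unfinished work in your plan, and the parameters $(\sigma,\delta)$ enter it nontrivially; the monodromy observation above makes it unnecessary. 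With that step either carried out or replaced by the non-diagonalisability argument, your proof is complete and is essentially the paper's.
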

\begin{proof}
  We prove this lemma by a contradiction. Thus, let us assume that $
  \cG\neq \mathrm{SL}(2, \C)$. Then, according to Lemma~\ref{lem:alg},
  $\cG$ is either a finite subgroup of $\mathrm{SL}(2, \C)$, or a
  subgroup of dihedral group or a triangular group of $\mathrm{SL}(2,
  \C)$.

  At first we show that the first two possibilities do not occur.  As
  $\Delta_{3,4}=0$, local solutions in a neighbourhood of
  $z_{\star}=z_{3}$, and $z_{\star}=z_{4}$, contain a logarithmic
  term. More precisely two linearly independent solutions $w_1$ and
  $w_2$ of \eqref{eq:normal} in a neighbourhood of
  $z_{\star}\in\{z_3,z_4\}$, have the following forms
  \begin{equation}
    \label{eq:w12}
    w_1(z) = (z - z_{\ast})^{\rho} f(z), \qquad w_2(z)=w_1(z)\ln
    (z-z_{\ast})+(z-z_{\ast})^{\rho} h(z),
  \end{equation}
  where $f(z)$ and $ h(z)$ are holomorphic at $z_{\ast}$, and
  $f(z_{\ast})\neq 0$. The local monodromy matrix corresponding to
  a continuation of solutions along a small loop encircling $z_{\ast}$
  counterclock-wise gives rise to a triangular monodromy matrix
  \[
  \vM_{\star}=
  \begin{bmatrix*}[r]
    -1&-2\pi\rmi\\
    0&-1
  \end{bmatrix*},
  \]
  for details, see \cite{mp:02::d}.  A subgroup of $\mathrm{SL}(2,
  \C)$ generated by $\vM_{\star}$ is not finite, and thus also the
  differential Galois group $\cG$ of~\eqref{eq:normal} is not
  finite. As $\vM_{\star}$ is not diagonalisable, $\cG$ is not a
  subgroup of dihedral group.  Thus, $\cG$ is a subgroup of the
  triangular group of $\mathrm{SL}(2, \C)$.

  In order to check if it is really the case we apply the Kovacic algorithm,
for details
  see Appendix~\ref{app:kov}. According to it, if $\cG$ is a subgroup of the
  triangular group of $\mathrm{SL}(2, \C)$, the first case of this
  algorithm occurs, i.e., the equation admits an exponential solution.

  To check this possibility we determine sets of exponents
  \[
  E_i:= \{ (1\pm\Delta_i)/2\}, \qquad\text{for}\quad i=1,\ldots,6.
  \]
  Thus we have
  \begin{equation}
    E_1=E_2=\left\{\dfrac14,\dfrac34\right\},\quad
    E_3=E_3=\left\{\dfrac12\right\},\quad
    E_5=\left\{-\dfrac12,\dfrac32\right\},\quad E_6=\{0,1\}.
  \end{equation}
  Next, according to the algorithm  we look for elements
  $e=(e_1,e_2,e_3,e_4,e_5,e_6)$ of Cartesian product
  $E=\prod_{i=1}^6E_i$ such that
  \begin{equation}
    d(e):=e_6-\sum_{i=1}^5e_i\in\N_0,
  \end{equation}
  where $\N_0$ is the set of non-negative integers.  In our case,
  there exists only one $e\in E$ with this property, namely
  \begin{equation}
    e=\left(\dfrac14,\dfrac14,
      \dfrac12,\dfrac12,-\dfrac12,1\right), 
  \end{equation}
  for which $d(e)=0$. Thus, according to the algorithm, we look for a
  polynomial $P\neq0$ of degree $d(e)$, such that it is a solution of
  the following equation
  \begin{equation}
    \label{eq:P}
    P'' +2 \omega P' + (\omega' + \omega^2 - r) P = 0,
  \end{equation}
  where
  \[
  \omega=\sum_{i=1}^5\dfrac{e_i}{z-z_i}.
  \]
  In the considered case, we have $P=1$, so equation \eqref{eq:P}
  gives equality
  \[
  (\sigma z +2) \left[ (\sigma^2 \delta^2+4)z+4\sigma\right]=0,
  \]
  which cannot be fulfilled for $\sigma>0$. This end the proof.
\end{proof}
Now, let us analyse the cases excluded in Lemma~\ref{lem:gennier}.
Case $\sigma=0$ is equivalent to $\alpha=l_1/l_3=0$ and
$\beta=l_1/l_2=0$. So, it is equivalent to the case of zero length of
the first pendulum. In effect, the system consists of two simple
independent pendula and is obviously integrable.

In the case of confluence of singular point $z_3$ and $z_5$ we prove
the following.
\begin{lemma}
  Let us assume that
  \begin{equation}
    \label{eq:c35}
    \delta=\frac{2}{\sigma}\geq 1. 
  \end{equation}
  Then the differential Galois group of equation \eqref{eq:normal} is
  $\mathrm{SL}(2,\C)$.
  \label{lem:niegennier}
\end{lemma}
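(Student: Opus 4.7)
The strategy parallels that of Lemma~\ref{lem:gennier}. I would first specialise the coefficient $r(z)$ to the hypothesis $\delta\sigma=2$. Since under this relation $\rho=-1/\sigma-\delta^{2}\sigma/4=-\delta$, the singular points $z_{4}=-\delta$ and $z_{5}=\rho$ coincide, and the resulting equation has only the five singular points $\{\pm1,\pm\delta,\infty\}$. Crucially, the point $z_{3}=\delta$ is untouched: the coefficient of $(z-\delta)^{-2}$ in $r(z)$ is still $-1/4$, hence $\Delta_{3}=0$ persists and local solutions near $z_{3}$ have the logarithmic form \eqref{eq:w12}. Continuation of solutions around $z_{3}$ therefore produces the non-diagonalisable monodromy matrix $\vM_{\star}$ already computed in the proof of Lemma~\ref{lem:gennier}. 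Since $\vM_{\star}$ has infinite order and is non-semisimple, Lemma~\ref{lem:alg} rules out finite and dihedral $\cG$; only case~1 of the Kovacic algorithm, or $\cG=\mathrm{SL}(2,\C)$, remain as possibilities.

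Next I would exclude case~1 by rerunning the Kovacic test on the specialised $r(z)$. At the merged point $z=-\delta$ the double-pole coefficients $-1/4$ and $3/4$ coming from the bracketed part combine to $1/2$, and the rational tail of $r(z)$ contributes an additional $(z+\delta)^{-2}$-coefficient, which one evaluates from the numerator at $z=-\delta$ using $\delta\sigma=2$; the total double-pole coefficient works out to $3/4$, giving $\Delta_{\text{merged}}=2$. The exponent at infinity is recomputed similarly from the leading behaviour of the specialised $r(z)$. Forming the sets $E_{i}=\{(1\pm\Delta_{i})/2\}$, I would enumerate tuples $e$ in the Cartesian product with $d(e)=e_{\infty}-\sum_{\text{finite }i}e_{i}\in\N_{0}$ and, for each admissible $e$, try to solve \eqref{eq:P} by a polynomial $P$ of degree $d(e)$ with $\omega=\sum_{i}e_{i}/(z-z_{i})$. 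Each such substitution reduces to a polynomial identity in $z$ whose coefficients are explicit rational expressions in $\sigma$; these cannot simultaneously vanish under $\sigma>0$ and $\delta\sigma=2$, in parallel with the generic case.

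The main obstacle is this last enumeration. The merger at $z=-\delta$ alters the local pole data and may admit several candidate tuples $e$ rather than the single tuple that appeared in Lemma~\ref{lem:gennier}, so one has to carry out a small case analysis and verify the non-existence of the auxiliary polynomial $P$ in every subcase. Once all such candidates are disposed of, no Kovacic case applies, and therefore $\cG=\mathrm{SL}(2,\C)$.
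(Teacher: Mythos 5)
Your overall strategy --- use the logarithmic local solution at $z_3=\delta$ to exclude the finite and dihedral cases, then kill Case~I of the Kovacic algorithm on the specialised $r(z)$ --- is exactly the paper's. However, two genuine gaps remain.

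First, the hypothesis $\delta=2/\sigma\geq 1$ includes $\delta=1$ (i.e.\ $\sigma=2$), and your argument silently assumes $\delta>1$: you assert that the equation has ``only the five singular points $\{\pm1,\pm\delta,\infty\}$.'' When $\delta=1$ the points $\pm\delta$ collide with $\pm1$, the equation degenerates to a Riemann $P$ equation with three regular singular points, and your key step collapses --- the coefficient of $(z-1)^{-2}$ is no longer $-1/4$, so $\Delta_3=0$ and the logarithmic form \eqref{eq:w12} are lost. The paper treats $\delta=1$ separately, where $r(z)=-(7+4z)/\bigl(4(z^2-1)^2\bigr)$, and settles it with Kimura's theorem; you must add this sub-case.

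Second, for $\delta>1$ the decisive step --- showing that Case~I cannot occur --- is only sketched, and the local data you feed into it is in doubt. From the printed generic $r(z)$ you conclude that the merged point $z=-\delta$ carries a double pole with coefficient $3/4$, hence $E_{\mathrm{merged}}=\{-1/2,3/2\}$; with that data the candidate set is non-empty (for instance $e=(1/4,3/4,1/2,-1/2,1)$ gives $d(e)=0$), so the non-existence of a polynomial $P$ solving \eqref{eq:P} must be verified explicitly, and you do not do this --- you only assert that it ``works out in parallel with the generic case.'' The paper's own specialisation instead finds that after the confluence $z=-\delta$ is merely a \emph{first-order} pole of $r$, in which case Kovacic's rule gives the single exponent $1$ there, every $d(e)$ is negative, and Case~I dies with no polynomial check at all. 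Be warned that the printed generic $r(z)$ cannot simply be read off for this purpose: its $z^{-2}$ coefficient at infinity evaluates to $-1/4$ rather than $0$, contradicting the stated $\Delta_6=1$, so the displayed formula contains a typo and the specialised partial-fraction data must be recomputed from the variational equation. Until the exponents at the merged point are pinned down and the resulting list of candidate tuples is actually disposed of, the proof is not complete.
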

\begin{proof}
  Then coefficient $r$ in equation~\eqref{eq:normal} simplifies to the
  following form
  \begin{equation}
    \begin{split}
      r=& -\dfrac{3}{16 (z-1)^2} -\dfrac{3}{16 (z+1)^2}
      -\dfrac{1}{4(z-z_3)^2}-
      \dfrac{1}{2\rho(z-z_4)} +\\
      &\dfrac{3-\rho}{ 16 (\rho+1) (z+1)} +\dfrac{3+\rho}{ 16 (\rho-1)
        (z-1)} -\dfrac{1}{2\rho (\rho^2-1) (z-z_{3})},
    \end{split}
    \label{eq:rrexp}
  \end{equation} 
  where $z_3=-z_4=\delta:=2/\sigma$. Now, if $\delta>1$, then equation
  \eqref{eq:normal} has five singularities $z_{1,2}=\pm 1$, $z_3$ and
  $z_4$, and $z_5=\infty$.

  One can easily check that the difference of exponents at
  $z_{\star}=z_3$ vanishes.  Thus the local solutions around this
  point have the form~\eqref{eq:w12}.  One can repeat reasoning from
  the proof of Lemma~\ref{lem:gennier}.  In order to check whether the first
case of the Kovacic algorithm can appear  at first we calculate the  sets of
exponents $E_i$ for $i=1,\ldots,5$
  \begin{equation}
    E_1=E_2=\left\{\dfrac14,\dfrac34\right\},\quad
    E_3=\left\{\dfrac12\right\},\quad
    E_4=\left\{0,1\right\},\quad E_5=\{0,1\}.
  \end{equation}
  One can easily check that for each $e=(e_1,e_2,e_3,e_4, e_5 )\in
  \prod_{i=1}^5E_i$ we have
  \begin{equation}
    d(e)=e_5-\sum_{i=1}^4e_i\not\in\N_0.
  \end{equation}
  Thus, if $\delta>1$, according to Kovacic algorithm, the differential
  Galois group of \eqref{eq:normal} is $\mathrm{SL}(2,\C)$.

  For $\delta=1$, i.e., for $\sigma=2$, coefficient $r$ simplifies to
  \begin{equation}
    r(z)=-\dfrac{7 + 4 z}{4 (-1 + z^2)^2}.
    \label{eq:redek}
  \end{equation}
  Now, equation \eqref{eq:normal} has only three regular
  singularities, i.e., it is the Riemann $P$ equation. Using the well
  known Kimura theorem~\cite{Kimura:69::}, one can check that its
  differential Galois group is $\mathrm{SL}(2,\C)$.
\end{proof}
\begin{remark}
  An equivalent form of condition $\sigma\delta=2$ is following
  \[
  l_1^2 (l_2 + l_3) (l_2 m_2 + l_3 (m_1 + m_2))-4 l_2^2 l_3^2 m_2=0.
  \]
\end{remark}
For $m_1=0$  and $l_1\neq 0$, using the Kovacic algorithm,  one can  prove only
that if the differential Galois group of variational equations along the
mentioned particular solution is not $\mathrm{SL}(2,\C)$, then it  
 is a finite subgroup of $\mathrm{SL}(2,\C)$. Calculations in first and
second case of this algorithm are simple but in the third case becomes really
complicated.   Poincar\'e
sections given in  Fig.~\ref{l7} suggest that also in this case system is
not integrable. 
\begin{figure}[htp]
  \begin{center}
    \includegraphics[scale=0.16]{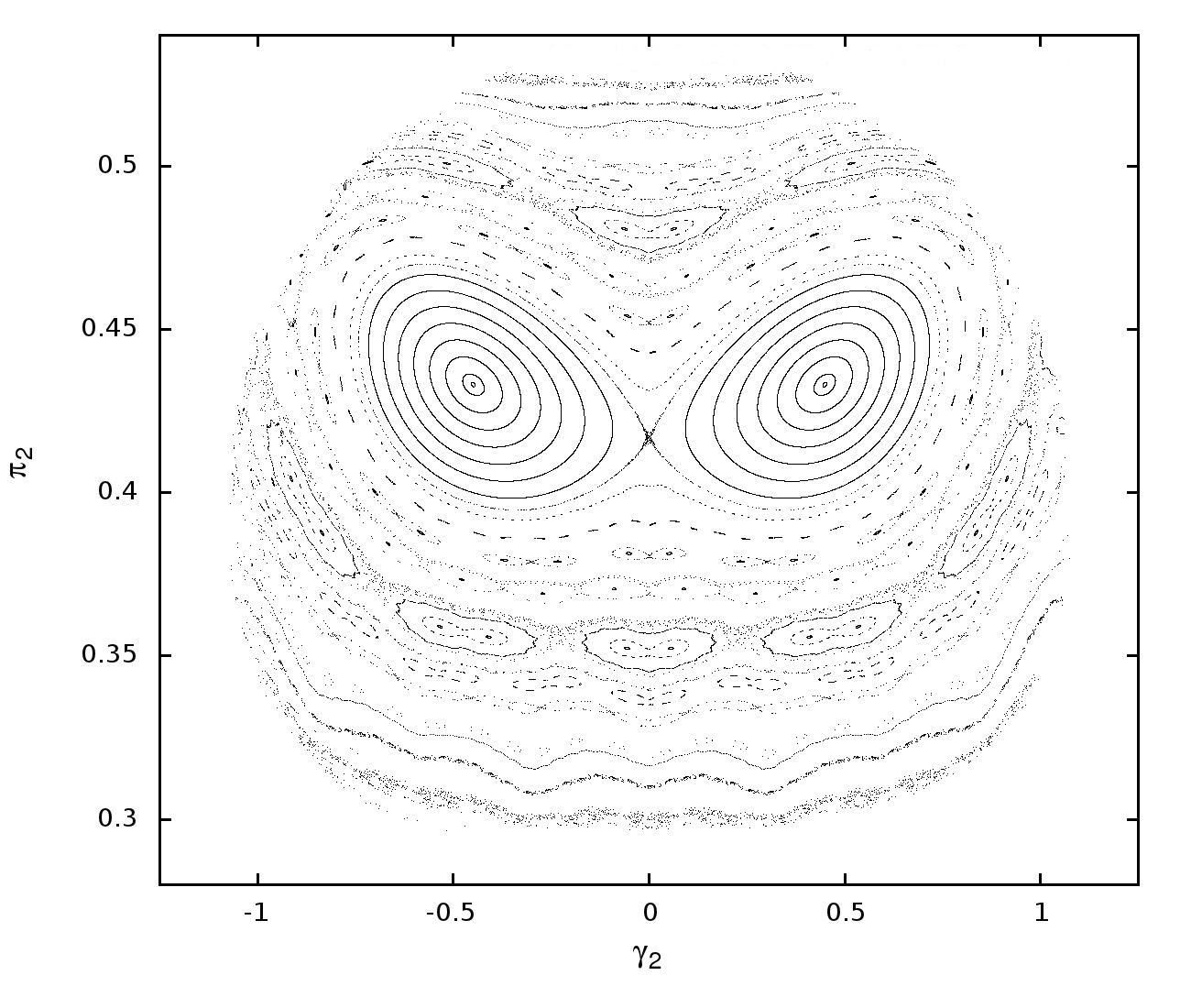}
    \includegraphics[scale=0.16]{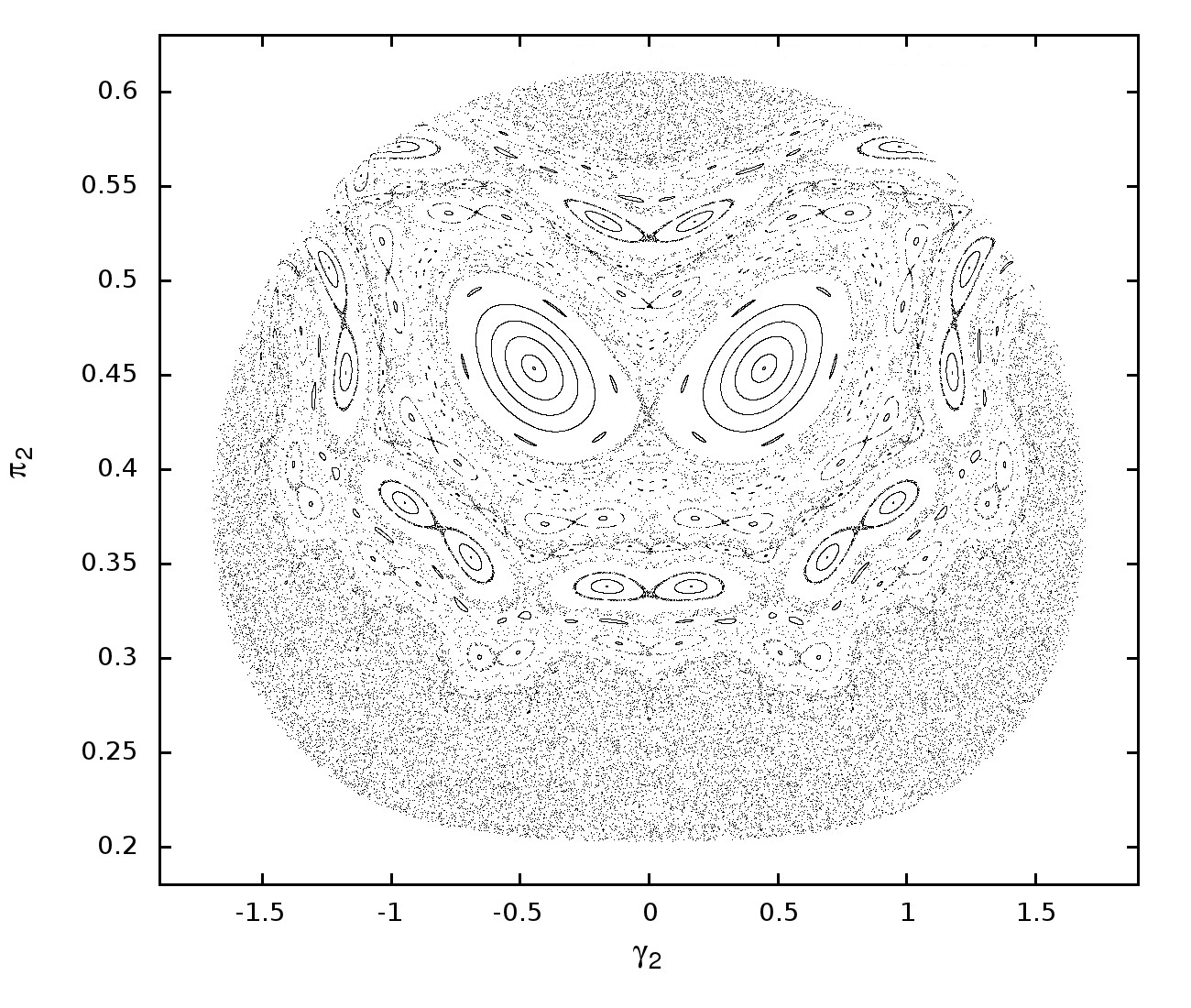}
  \end{center}
  \caption{\small Poincar\'e sections for $m_1 = 0,\, m_2 = 3,\, m_3 =
    2,\, l_1 = 1,\, l_2 = 2,\, l_3 = 3,\, \pi_3 = c = 1,$ and for $E =
    0.0095$ (on the left) and $E = 0.011$ (on the
    right). Cross-section plane $\gamma_1=0$ \label{l7}}
\end{figure}

The proof of the first part of Theorem~\ref{thm:main} is an immediate
consequence of Lemma~\ref{lem:gennier}, \ref{lem:niegennier} and
Theorem~\ref{thm:MR}.
\subsection{Case $l_3=l_2$ and $m_3=m_2$}
\begin{figure}[htp]
  \begin{center}
    \includegraphics[scale=0.16]{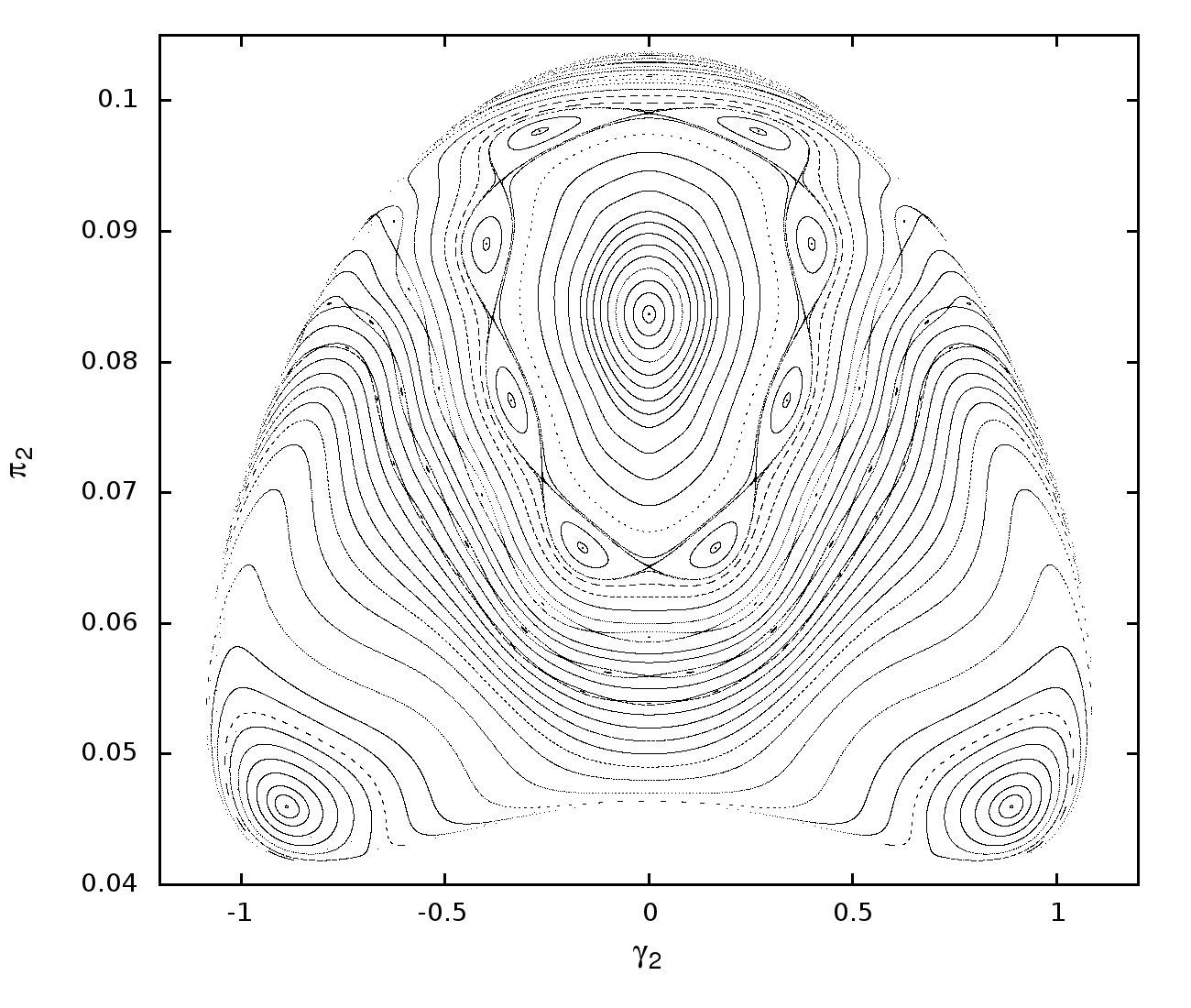}
    \includegraphics[scale=0.16]{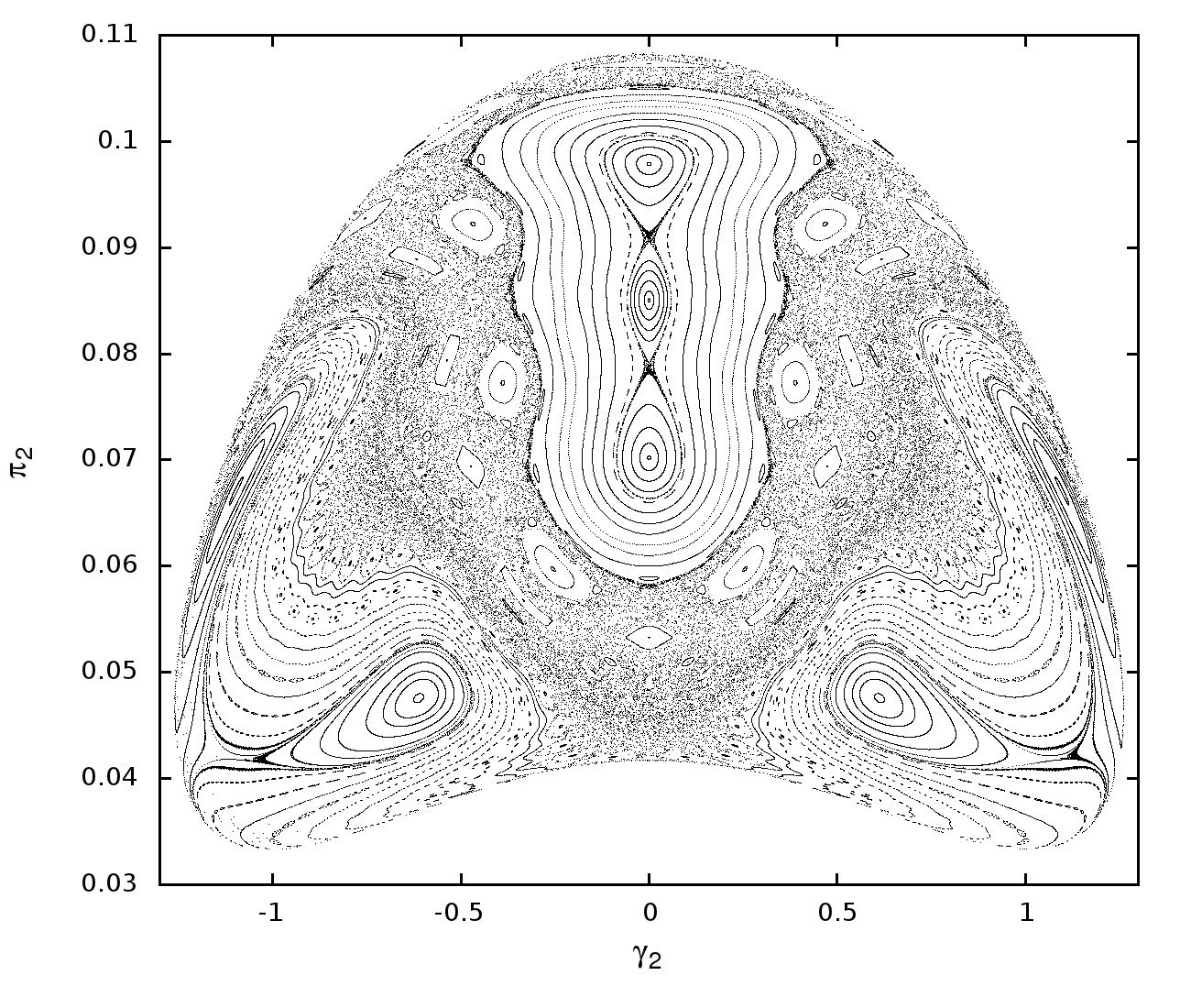}
  \end{center}
  \caption{\small Poincar\'e sections for $ m_1 = 1,\, m_2 = m_3 =
    2,\, l_1 = 2,\, l_2 = l_3 = 1,\, \pi_3 = c = \frac{1}{2}$ and $E =
    0.0035$ (on the left) and $E = 0.0363$ (on the
    right). Cross-section plane $\gamma_1=0$ \label{l33}}
\end{figure}
If  masses and lengths of two pendula attached to the first one are
equal, the  one can expect that behaviour of such system is regular. But
Poincar\'e sections suggest that also in this case the system is not
integrable.

In order to prove that the system is not integrable we apply again the
Morales-Ramis theory.  For $l_3=l_2$ and $m_3=m_2$, the reduced
Hamiltonian~\eqref{eq:tH} and the corresponding equations of
motion~\eqref{eq:systfin} simplify considerably.  In this case, for
$c=0$, the system possesses the following invariant manifold
\[
\cN=\{(x_1,x_2,y_1,y_2)\in\C^4\ |\ x_1=y_1=0 \},
\]
and its restriction to this manifold takes the form
\begin{equation}
  \dot x_1=0,\qquad \dot x_2=-\dfrac{y_2}{l_2^2m_2},\qquad \dot
  y_1=0,\qquad
  \dot y_2=0.
  \label{eq:particular2}
\end{equation} 
These equations determine our particular solution $\vvarphi(t)$.
Variational equations~\eqref{eq:ve} along this particular solution are
also simpler. Namely,  entries $a_{i,4}$ of matrix $\vA$ vanish
for $1\leq i \leq 4$.

Now, equations for $X_1$ and $Y_1$ form a closed subsystem of
variational equations. From them we obtain a second order equation for
$X_1$.  It takes form \eqref{eq:variaty} with coefficients
\[
\begin{split}
  & a=-\dfrac{4 y_2 (l_2 + l_1 \cos x_2) (l_1 (m_1 + 2 m_2) + 2 l_2
    m_2 \cos x_2) \sin x_2}{l_2^2 (2 l_2^2 m_2 + l_1^2 (m_1 + 2 m_2) +
    4 l_1 l_2 m_2 \cos x_2) (m_1 +
    m_2 - m_2 \cos(2x_2))},\\
  &b=\dfrac{2 y_2^2 (l_2 + l_1 \cos x_2) (4 l_1 l_2 m_2 + (2 l_2^2 m_2
    + l_1^2 (m_1 + 2 m_2)) \cos x_2)}{l_1 l_2^4 m_2 (2 l_2^2 m_2 +
    l_1^2 (m_1 + 2 m_2) + 4 l_1 l_2 m_2 \cos x_2) (m_1 + m_2 - m_2
    \cos(2x_2))}.
\end{split}
\]
Changing independent variable according to \eqref{eq:ratio} we obtain
equation of the form~\eqref{eq:zvar} with coefficients
\[
\begin{split}
  p&=\dfrac{-8 z^2 \alpha + 16 z^4 \alpha - 4 \alpha (2 + \gamma) + 6
    z^3 (2 + \alpha^2 (2 + \gamma)) - z (6 + \gamma) (2 + \alpha^2 (2
    + \gamma))}{(z^2-1) (-2 +
    2 z^2 - \gamma) (2 + 4 z \alpha + \alpha^2 (2 + \gamma))},\\
  q&=\dfrac{2 (1 + z \alpha) (4 \alpha + z (2 + \alpha^2 (2 +
    \gamma)))}{(z^2-1) \alpha (-2 + 2 z^2 - \gamma) (2 + 4 z \alpha +
    \alpha^2 (2 + \gamma))}.
\end{split}
\]
The coefficient $r$ in reduced equation \eqref{eq:normal} has the following
expansion
\begin{equation}
  \begin{split}
    &r= -\dfrac{3}{16 (z-1)^2} -\dfrac{3}{16 (z+1)^2}-\dfrac{1}{4 (z
      - \delta)^2} - \dfrac{1}{4 (z + \delta)^2}+\dfrac{3}{4(z-z_5)}\\
    &+\dfrac{8 + \alpha (15 + \delta^2 + \alpha (6 + \delta^2 (2 +
      \alpha (-1 + \delta^2))))}{16 \alpha (-1 + \delta^2) (1 + \alpha
      (2 +
      \alpha\delta^2))(z-1)}\\
    & +\dfrac{8 + 3 \alpha (-5 + 2 \alpha) + \alpha (-1 + \alpha (2 +
      \alpha)) \delta^2 - \alpha^3 \delta^4}{16 \alpha (-1 + \delta^2)
      (1 + \alpha (-2 + \alpha \delta^2))(z+1)}+\dfrac{-2 \delta +
      \alpha (-5 + \delta (\delta + \alpha (-2 + \alpha \delta (-1 +
      \delta^2))))}{4 \alpha
      \delta (1 + \alpha \delta)^2 (-1 + \delta^2) (z - \delta)}\\
    &-\dfrac{2 \delta + \alpha (-5 + \delta (\delta + \alpha (2 +
      \alpha \delta (-1 + \delta^2))))}{4 \alpha \delta (-1 + \alpha
      \delta)^2 (-1 + \delta^2)(z +\delta)} +\dfrac{\alpha (5 + 3
      \alpha^4 \delta^4 + \alpha^6 \delta^6 + \alpha^2 (-16 + 7
      \delta^2))}{(-1 + \alpha^2 \delta^2)^2 (1 + \alpha^4 \delta^4 +
      2 \alpha^2 (-2 + \delta^2))(z-z_5)}
  \end{split}
  \label{eq:rrexy_1}
\end{equation} 
where $z_5=-(1 + \alpha^2 \delta^2)/(2 \alpha)$ and
$\delta^2=1+\gamma/2$.  Equation \eqref{eq:zvar} with coefficient $r$
given in \eqref{eq:rrexy_1} has singularities $z_{1,2}=\pm1$,
$z_{3,4}=\pm\delta$ and $z_5=-(1+\alpha^2\delta^2)/(2\alpha)$, that
are all poles of the second order and infinity has degree 3 for all
positive $\alpha$ and $\delta$.

At first we consider the generic case.
\begin{lemma}
\label{lem:gen}
  If $\delta\neq 1/\alpha$ and $\delta\neq 1$, then the differential
  Galois group of equation \eqref{eq:normal} with $r(z)$ given
  by~\eqref{eq:rrexy_1} is $\mathrm{SL}(2, \C)$.
\end{lemma}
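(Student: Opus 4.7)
The plan is to mimic exactly the strategy of Lemma~\ref{lem:gennier}. I would argue by contradiction: suppose the differential Galois group $\cG$ of \eqref{eq:normal} with the new coefficient \eqref{eq:rrexy_1} is a proper algebraic subgroup of $\mathrm{SL}(2,\C)$. By Lemma~\ref{lem:alg}, $\cG$ must then be either finite, contained in a dihedral group, or contained in the Borel (triangular) subgroup. The key observation is that the coefficient of $1/(z-\delta)^2$ and $1/(z+\delta)^2$ in $r(z)$ is $-1/4$, so the differences of exponents satisfy $\Delta_3=\Delta_4=0$. Consequently, in a neighbourhood of $z_\ast\in\{\delta,-\delta\}$ the fundamental system of solutions has the logarithmic form \eqref{eq:w12}, and the monodromy generator around $z_\ast$ is the non-diagonalisable matrix
\[
\vM_{\ast}=\begin{bmatrix*}[r] -1 & -2\pi\rmi \\ 0 & -1\end{bmatrix*},
\]
which generates an infinite, non-diagonalisable subgroup of $\mathrm{SL}(2,\C)$. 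This rules out both the finite and dihedral alternatives (the latter because every element of the dihedral group is diagonalisable or of a specific off-diagonal form, incompatible with $\vM_\ast$).

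The remaining possibility is that $\cG$ is triangular, in which case case~1 of the Kovacic algorithm must yield a polynomial solution. I would therefore form the exponent sets $E_i=\{(1\pm\Delta_i)/2\}$ at each of the six singular points. Writing $z_{1,2}=\pm1$, $z_{3,4}=\pm\delta$, $z_5=-(1+\alpha^2\delta^2)/(2\alpha)$, $z_6=\infty$, and using the orders of the poles visible in \eqref{eq:rrexy_1} (second-order at $\pm1,\pm\delta$, first-order at $z_5$, and degree $3$ at infinity), this gives
\[
E_1=E_2=\{\tfrac14,\tfrac34\},\qquad E_3=E_4=\{\tfrac12\},\qquad E_5=\{1\},
\]
while $E_6$ is a small finite subset of $\Q$ computed from the leading behaviour at infinity via the standard recipe of the algorithm (see Appendix~\ref{app:kov}).

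Next I would enumerate all $e=(e_1,\ldots,e_6)\in\prod_i E_i$ with $d(e):=e_6-\sum_{i=1}^5 e_i\in\N_0$, and for each such tuple test whether there exists a nonzero polynomial $P$ of degree $d(e)$ solving \eqref{eq:P} with $\omega=\sum_{i=1}^5 e_i/(z-z_i)$. The exponents at $z_5$ and $\infty$ leave only a very small number of candidates (typically producing $d(e)=0$ or $1$), and substituting $P=1$ or $P=z-c$ into \eqref{eq:P} reduces the problem to checking that a certain polynomial identity in $z$ with coefficients in $\Q(\alpha,\delta)$ cannot be fulfilled. The hypotheses $\delta\neq 1$ and $\delta\neq 1/\alpha$ ensure that the denominators $1-\delta^2$, $1\pm\alpha\delta$, and $(1\pm\alpha^2\delta^2)$ appearing in \eqref{eq:rrexy_1} do not vanish, so the candidate relations become nontrivial polynomial equations in $z$ whose coefficients are nonzero for all admissible $\alpha,\delta>0$, giving the sought contradiction.

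The main obstacle will be the enumeration at $z_5$ and at infinity: because $z_5$ is only a simple pole the associated local contribution is rigid, but the third-degree behaviour at infinity requires the variant of case~1 that involves truncating a Laurent series to obtain $E_6$, and the subsequent polynomial test \eqref{eq:P} produces a cumbersome rational expression in $\alpha$ and $\delta$. The crucial technical check is that, after clearing denominators, the coefficient polynomial in $z$ has a nonzero leading (or constant) term as a polynomial in $\alpha,\delta$ outside the excluded locus $(\delta-1)(\alpha\delta-1)=0$; once this is verified, Theorem~\ref{thm:MR} together with Lemma~\ref{lem:gen} finishes the second part of Theorem~\ref{thm:main}.
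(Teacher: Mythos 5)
Your overall strategy coincides with the paper's: the vanishing exponent differences at $z=\pm\delta$ force logarithmic local solutions of the form \eqref{eq:w12}, which excludes the finite and dihedral alternatives and reduces the problem to Case~I of the Kovacic algorithm. The gap is in the exponent data you feed into Case~I. The point $z_5=-(1+\alpha^2\delta^2)/(2\alpha)$ is a \emph{second-order} pole of $r$ with coefficient $3/4$: the term $\tfrac{3}{4(z-z_5)}$ displayed in \eqref{eq:rrexy_1} is evidently a typo for $\tfrac{3}{4(z-z_5)^2}$ (the surrounding text states explicitly that all five finite singularities are second-order poles, and since $p$ in \eqref{eq:zvar} has a simple pole at $z_5$, the term $\tfrac14 p^2$ already produces a double pole of $r$ there). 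Hence $E_5=\{-\tfrac12,\tfrac32\}$, not $\{1\}$. Likewise, since the order of $r$ at infinity is $3>2$, the rule for that subcase gives $E_6=\{0,1\}$ immediately; no Laurent-series truncation is involved (that device is only needed when the order at infinity is non-positive, which is not the situation here).

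These are not cosmetic slips: with your $E_5=\{1\}$ one has $\sum_{i=1}^5 e_i\ge \tfrac14+\tfrac14+\tfrac12+\tfrac12+1=\tfrac52>1\ge e_6$, so $d(e)<0$ for every tuple and you would conclude that Case~I is impossible without ever performing the polynomial test --- a conclusion resting on a wrong pole order. With the correct data there is exactly one admissible tuple, $e=(\tfrac14,\tfrac14,\tfrac12,\tfrac12,-\tfrac12,1)$ with $d(e)=0$, and the entire content of the lemma is the verification that substituting $P=1$ into \eqref{eq:P} forces the simultaneous equations $2\alpha=0$, $1+2\alpha^2+\alpha^2\delta^2=0$, $\alpha+\alpha^3\delta^2=0$, which are inconsistent. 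That computation, which is where the hypotheses actually enter, is the step your sketch never reaches. (Your hypothesis bookkeeping is also slightly off: $\delta\neq 1$ and $\delta\neq 1/\alpha$ are needed to keep the six singular points distinct so that the partial-fraction data \eqref{eq:rrexy_1} is valid at all; the confluent cases are handled in separate lemmas, not by a genericity-of-coefficients argument.)
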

\begin{proof}
  We apply the Kovacic algorithm.  The sets of exponents $E_i$ in this
  case are
  \begin{equation}
    E_1=E_2=\left\{\dfrac14,\dfrac34\right\},\quad
    E_3=E_4=\left\{\dfrac12\right\},\quad
    E_5=\left\{-\dfrac{1}{2},\dfrac{3}{2}\right\},\quad E_6=\{0,1\}.
  \end{equation}
  Hence in a neighbourhood of $z_{\star}=z_3$, and $z_{\star}=z_4$
  local solutions have the form~\eqref{eq:w12}. Thus, we proceed in a
  similar way as in the proofs of previous lemmas.

  There exists only one $e\in E=\prod_{i=1}^{5}E_i$ , such that
  $d(e)\in\N_0$. This is
  \begin{equation}
    e=\left(\dfrac14,\dfrac14,
      \dfrac12,\dfrac12,-\dfrac12,1\right)
  \end{equation}
  and for it we have $d(e)=0$.  Then we construct rational functions
  \[
  \omega=\sum_{i=1}^5\dfrac{e_i}{z-z_i}
  \]
  and check whether monic polynomial $P=1$ satisfies differential
  equation \eqref{eq:P} for some values of parameters $\alpha$ and
  $\delta$. This equation is satisfied only when equalities
  \[
  2 \alpha = 0,\quad 1 + 2 \alpha^2 + \alpha^2 \delta^2 = 0,\quad
  \alpha + \alpha^3 \delta^2 = 0,
  \]
  hold but this system of algebraic equations is not consistent.
\end{proof}

\begin{lemma}
\label{lem:gen1}
  If $\delta= 1/\alpha$ and $\delta\neq 1$, then the differential
  Galois group of equation \eqref{eq:normal} with $r(z)$ given
  by~\eqref{eq:rrexy_1} is $\mathrm{SL}(2, \C)$.
\end{lemma}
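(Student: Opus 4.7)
The plan is to adapt the proof of Lemma~\ref{lem:gen} to the degenerate case $\alpha\delta=1$. The first observation is that under this constraint one has
\[
z_5 = -\frac{1+\alpha^2\delta^2}{2\alpha} = -\frac{1}{\alpha} = -\delta = z_4,
\]
so the simple pole of $r(z)$ at $z_5$ collides with the double pole at $z_4$. Setting $\delta=1/\alpha$ directly in $r(z)$---handling the limit carefully, as several residues in \eqref{eq:rrexy_1} have denominators proportional to $(1-\alpha\delta)^2$---the reduced equation \eqref{eq:normal} has five regular singular points $z_{1,2}=\pm 1$, $z_3=\delta$, $z_4=-\delta$, and infinity, and the coefficient of $(z+\delta)^{-2}$ in $r(z)$ remains $-1/4$.

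Next, I would compute the exponents at each of these singularities. I expect the local exponents at $z_4$ to still coincide at $1/2$, so that $\Delta_4=0$ and a local fundamental system near $-\delta$ contains a logarithmic solution of the form \eqref{eq:w12}. Exactly as in the proofs of Lemmas~\ref{lem:gennier} and \ref{lem:gen}, the associated local monodromy is a non-diagonalisable matrix, which rules out both the finite and dihedral cases of the Kovacic algorithm and reduces the possibilities to the triangular case or the full $\mathrm{SL}(2,\C)$. To eliminate the triangular case I would apply the first case of the algorithm: list the exponent sets $E_1,\ldots,E_5$ (with $E_5$ attached to infinity), enumerate the tuples $e\in\prod_{i=1}^{5}E_i$ for which $d(e)=e_5-\sum_{i=1}^4 e_i\in\N_0$, and for each such $e$ check whether a monic polynomial $P$ of degree $d(e)$ solving \eqref{eq:P} exists. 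I expect the only candidate to be $P=1$ (corresponding to $d(e)=0$); substituting into \eqref{eq:P} then produces a system of algebraic equations in $\alpha$ that I expect to be inconsistent whenever $\alpha\neq 1$, i.e., whenever $\delta=1/\alpha\neq 1$.

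The main obstacle I anticipate is the explicit simplification of $r(z)$ under the confluence: the partial-fraction decomposition \eqref{eq:rrexy_1} has several coefficients whose denominators vanish precisely at $\alpha\delta=1$, so one cannot simply substitute. The cleanest way around this is to plug $\delta=1/\alpha$ into the rational functions $p$ and $q$ \emph{before} the decomposition, then recompute $r=-q+\tfrac12 p'+\tfrac14 p^2$ and perform a fresh partial-fraction expansion of the resulting rational function; the apparent singularities should cancel, leaving a clean five-singularity expression. Once this simplification is in hand, the remainder of the argument---residue computation, exponent enumeration, and the final contradiction via \eqref{eq:P}---is entirely parallel to Lemmas~\ref{lem:gennier}, \ref{lem:niegennier} and \ref{lem:gen}, and the conclusion $\cG=\mathrm{SL}(2,\C)$ follows.
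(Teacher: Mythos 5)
Your overall architecture matches the paper's: force the triangular/$\mathrm{SL}(2,\C)$ dichotomy via a logarithmic local solution, then kill Case~I of the Kovacic algorithm by the exponent enumeration and equation \eqref{eq:P}. However, your anticipated outcome of the confluence is wrong, and the error sits exactly where your argument leans on it. When $\delta=1/\alpha$ the simple pole $z_5$ does collide with $z_4=-\delta$, but the collision does \emph{not} leave the coefficient of $(z+\delta)^{-2}$ equal to $-1/4$: the residues at $z_4$ and $z_5$ in \eqref{eq:rrexy_1} both blow up like $(1-\alpha\delta)^{-2}$ while $z_5-z_4=-(1-\alpha\delta)^2/(2\alpha)$, and the resulting finite contribution to the second-order part exactly cancels the $-1/4$. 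The correct limit (which the paper computes by substituting $\delta=1/\alpha$ before decomposing, precisely as you propose in your last paragraph) has $z_4=-1/\alpha$ as a pole of order \emph{one}; only $z_{1,2}=\pm1$ and $z_3=+1/\alpha$ remain second-order poles. Consequently your claim that ``the local exponents at $z_4$ still coincide at $1/2$'' is false, the logarithmic solution of the form \eqref{eq:w12} must be exhibited at $z_3=+\delta$ (which is untouched by the confluence and still has vanishing exponent difference), and your exponent set at $z_4$ would be $\{0,1\}$ rather than $\{1/2\}$. With the wrong set $E_4=\{1/2\}$ the Case~I enumeration produces no admissible tuple at all, which would (incorrectly) end the argument prematurely without ever testing \eqref{eq:P}.

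Once the singularity structure is corrected, the rest goes through as you describe and as in the paper: the sets are $E_1=E_2=\{1/4,3/4\}$, $E_3=\{1/2\}$, $E_4=E_5=\{0,1\}$, the unique tuple with $d(e)\in\N_0$ is $e=(1/4,1/4,1/2,0,1)$ with $d(e)=0$, and substituting $P=1$ into \eqref{eq:P} yields $\alpha(z+\alpha)=0$, impossible for $\alpha>0$. Note also that the final contradiction requires only $\alpha\neq 0$, not $\alpha\neq 1$ as you anticipated; the hypothesis $\delta\neq 1$ enters only through keeping the singular points distinct. Your methodological remark about recomputing $r=-q+\tfrac12 p'+\tfrac14 p^2$ after the substitution is exactly right and would have revealed the cancellation; the gap is that the proof as written commits to the wrong answer before that computation is done.
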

\begin{proof}
  In this case $r$ given in~\eqref{eq:rrexy_1} simplifies to the form
  \begin{equation}
    \begin{split}
      r&=-\dfrac{3}{16 (z-1)^2}-\dfrac{3}{16 (z+1)^2}
      -\dfrac{1}{4(z-z_3)^2}-\dfrac{1 +3 \alpha}{16(\alpha-1) (z-1)} +
      \dfrac{-1 + 3 \alpha}{
        16 (\alpha+1)(z+1)}\\
      &+\dfrac{\alpha^3}{2(-1 + \alpha^2) (z-z_3)}-\dfrac{\alpha}{2
        (z-z_4)},
    \end{split}
  \end{equation} 
  where $z_3=\alpha^{-1}$ and $z_4=-\alpha^{-1}$. Singularities
  $z_{1,2}=\pm1$ and $z_3$ are poles of the second order, and $z_4$ is
  a pole of the first order. The infinity $z_5=\infty$ is generically
  of degree 3, i.e., it is a regular singular point. Its degree
  changes only for excluded values of $\alpha$, i.e., for
  $\alpha\in\{0,1\}$.  The difference of exponents at $z_{\star}=z_3$
  is zero.  Thus local solutions around this point have the
  form~\eqref{eq:w12}, and the differential Galois group is either a
  subgroup of the triangular group, or it is $\mathrm{SL}(2,\C)$. Thus we have
to check only the first case of the Kovacic algorithm.

  The sets of exponents $E_i$ are now following
  \[
  E_1=E_2=\left\{\dfrac{1}{4},\dfrac{3}{4}\right\},\qquad
  E_3=\left\{\dfrac{1}{2}\right\},\qquad E_4=E_5=\{0,1\}.
  \]
  Only for $e=(1/4,1/4,1/2,0,1)\in\prod_{i=1}^{5}$, value of $d(e)$ is
  an integer equal to $d(e)=0$. Inserting rational function
  \[
  \omega=\sum_{i=1}^4\dfrac{e_i}{z-z_i},
  \]
  and $P=1$ into differential equation \eqref{eq:P} we obtain $\alpha
  (z + \alpha)=0$.  This condition is fulfilled only for excluded
  value $\alpha=0$.  Thus, the differential Galois group is not a
  subgroup of the triangular group, so it is $\mathrm{SL}(2,\C)$.
\end{proof}
Case when $\delta=1$ corresponds to $\gamma=0$ and now we prove the following
lemma.
\begin{lemma}
\label{lem:gen2}
  For $\gamma=0$ differential Galois group of \eqref{eq:normal} is
  $\mathrm{SL}(n,\C)$.
\end{lemma}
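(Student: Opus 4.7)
The plan is to apply the Kovacic algorithm to~\eqref{eq:normal} at $\gamma=0$ (equivalently $\delta=1$). The general formula~\eqref{eq:rrexy_1} for $r$ contains the factor $-1+\delta^{2}$ in several denominators and is singular in this limit, so I would first redo the reduction from scratch. Substituting $\gamma=0$ into $p(z)$ and $q(z)$ and using $1+2\alpha z+\alpha^{2}=2\alpha(z-z_{5})$ with $z_{5}=-(1+\alpha^{2})/(2\alpha)$, the coefficients simplify: $p$ has simple poles only at $z=\pm 1$ and $z=z_{5}$ with residues $\tfrac{3}{2},\tfrac{3}{2},-1$, while $q$ has double poles at $z=\pm 1$ and a simple pole at $z_{5}$. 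Forming $r=-q+\tfrac{1}{2}p'+\tfrac{1}{4}p^{2}$ then yields a rational function with (generically) four regular singular points $\pm 1$, $z_{5}$ and $\infty$.

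The key local calculation is the indicial equation of $X''+pX'+qX=0$ at $z=1$, which reduces to $\nu^{2}+\tfrac{1}{2}\nu+(1+\alpha)/(4\alpha)=0$. Its discriminant $-(4+3\alpha)/(4\alpha)$ is strictly negative for every $\alpha>0$, so the exponents at $z=1$ form a complex conjugate pair. Equivalently, \eqref{eq:normal} has a purely imaginary exponent difference $\Delta_{1}=\rmi\sqrt{(4+3\alpha)/(4\alpha)}$ at $z=1$, and the local monodromy eigenvalues $-\exp(\mp 2\pi\tau)$, with $\tau=\tfrac{1}{2}\sqrt{(4+3\alpha)/(4\alpha)}>0$, have modulus different from $1$ and hence are not roots of unity. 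The monodromy group of~\eqref{eq:normal} is therefore infinite, which excludes cases~$2$ and~$3$ of the Kovacic algorithm immediately.

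To rule out case~$1$ I would follow the enumeration used in Lemmas~\ref{lem:gen} and~\ref{lem:gen1}: list the exponent sets $E_{i}=\{(1\pm\Delta_{i})/2\}$ at all singular points and search for tuples $e\in\prod_{i}E_{i}$ with $d(e)\in\N_{0}$. Since $E_{1}=\{\tfrac{1}{2}\pm\rmi\tau\}$ is non-real, any choice of $e_{1}$ contributes a non-zero imaginary part to $d(e)$ that can be cancelled only by a matching non-real contribution from some other $e_{i}$. The remaining data---$\Delta_{2}=\sqrt{(4-3\alpha)/(4\alpha)}$ at $z=-1$ (real for $\alpha<4/3$ and imaginary of magnitude $\neq\tau$ for $\alpha>4/3$), $\Delta_{3}=2$ at $z_{5}$ and $\Delta_{4}=1$ at infinity---show that no such cancellation is possible for any $\alpha>0$. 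Consequently case~$1$ also fails and Lemma~\ref{lem:alg} forces the differential Galois group to be $\mathrm{SL}(2,\C)$.

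The main technical obstacle I anticipate is precisely this bookkeeping: the exponent sets at $\pm 1$ depend on $\alpha$, so one has to run the imaginary-part cancellation argument uniformly across the two regimes $\alpha<4/3$ and $\alpha>4/3$, and dispose of the isolated confluent values $\alpha=4/3$ (where $z=-1$ develops a double exponent) and $\alpha=1$ (where $z_{5}=-1$) by separately invoking the logarithmic-monodromy argument already used in Lemma~\ref{lem:gennier}.
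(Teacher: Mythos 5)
Your route is genuinely different from the paper's: you stay with the variational equation along the particular solution \eqref{eq:particular2} and attack the degenerate limit $\delta=1$ directly, whereas the paper abandons that equation precisely here and passes to a second particular solution on the invariant manifold $\{\gamma_2=0,\ \pi_1=2\pi_2\}$, whose normal variational equation has only four singular points and is dispatched by short, finite Kovacic computations in all three cases. Your local analysis is correct as far as it goes: at $\gamma=0$ one indeed gets $p$ with residues $\tfrac32,\tfrac32,-1$ at $\pm1,z_5$, the indicial equation $\nu^2+\tfrac12\nu+(1+\alpha)/(4\alpha)=0$ at $z=1$, hence a purely imaginary exponent difference $\rmi\sqrt{(4+3\alpha)/(4\alpha)}$ there; this kills Case~III (the necessary condition $\sqrt{1+4a_i}\in\Q$ fails), and your imaginary-part bookkeeping does kill Case~I, since the only other candidate imaginary contribution, from $z=-1$ when $\alpha>4/3$, has magnitude $\sqrt{(3\alpha-4)/(4\alpha)}\neq\sqrt{(3\alpha+4)/(4\alpha)}$.

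The genuine gap is Case~II. An infinite monodromy group does \emph{not} exclude the dihedral case: the group $\cD^\dag$ of Lemma~\ref{lem:alg} is infinite and its diagonal part contains elements with eigenvalues of arbitrary modulus, so a local monodromy with eigenvalues $-\rme^{\mp2\pi\tau}$ is perfectly compatible with $\cG\subset\cD^\dag$. Infinite order of a monodromy element only rules out the finite primitive case. To exclude Case~II you must actually run the second case of the Kovacic algorithm, and there the work is not a formality: the sets are $E_{1}=\{2\}$ (non-real square root), $E_{z_5}=\{-2,2,6\}$, $E_\infty=\{1,2,4\}$, but $E_{-1}=\{2,\,2\pm2\sqrt{(4-3\alpha)/(4\alpha)}\}\cap\Z$ acquires \emph{odd} elements $2\pm k$ whenever $\alpha=4/(k^{2}+3)$ with $k$ odd; for each such $\alpha$ the family $(e_\infty,e_1,e_{-1},e_{z_5})=(1,2,2-k,-2)$ gives $d=(k-1)/2\in\N_0$, so one is forced into Step~III with candidate polynomials of unbounded degree as $k$ grows. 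This infinite family of resonant parameter values is exactly the kind of onerous computation the paper's authors avoid by changing the particular solution, and your proposal neither notices it nor provides a uniform argument to close it. Until Case~II is excluded for all $\alpha>0$ (including these resonances, as well as the confluent value $\alpha=1$), the proof is incomplete.
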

\begin{proof}
  Using particular solution $\vvarphi$ defined by
  \eqref{eq:particular2} one can easy check that the first and the
  second case of the Kovacic cannot occur but an analysis of the third
  case is very onerous.

  Fortunately, in this case we can find another particular solution
  for which analysis of the differential Galois group of the
  corresponding variational equations is simpler.

  Namely, Hamilton equation governed by Hamiltonian~\eqref{eq:hamfin}
  for $m_1=0$, $m_3=m_2$ and $l_3=l_2$ have another particular
  solution lying on invariant manifold
  \[
  \scN=\left\{(\gamma_1,\gamma_2, \pi_1, \pi_2 )\ |\ \gamma_2=0 \quad
    \pi_1=2\pi_2 \right\},
  \]
  We consider only the level $c=0$. It is convenient to make the
  following non-canonical change of variables
  \begin{equation}
    \begin{bmatrix}
      \gamma_1\\
      \gamma_2\\
      \pi_1\\
      \pi_2
    \end{bmatrix}
    = \vA
    \begin{bmatrix}
      x_1\\
      x_2\\
      y_1\\
      y_2
    \end{bmatrix},\qquad \vA=
    \begin{bmatrix}
      0&1&0&0\\
      1&0&0&0\\
      0&0&0&1\\
      0&0&-\dfrac{1}{2}&\dfrac{1}{2}\\
    \end{bmatrix}.
    \label{eq:trafononcan1}
  \end{equation}
  After this transformation equations of motion take the form
  \begin{equation}
    \dot\vx={\mathbb{J}}\nabla_{\vx}\widehat{H},\qquad
    {\mathbb{J}}=\vB\mathbb{I}\vB^T,\quad \vB=\vA^{-1}\qquad  
    \widehat{H}=H(\vA\vx),
    \label{eq:systfin1}
  \end{equation}
  where $\vx=[x_1,x_2,y_1,y_2]^T$.  Hamilton function of the reduced
  system in these variables takes the form
  \[
  \begin{split}
    \widehat H&=[4 l_2^2 y_2^2 + 4 l_1^2 (y_1^2 + y_2^2) -
    l_1 (\cos x_2 (-4 l_2 y_2 (y_1 + y_2) + l_1 (y_1 - y_2)^2 \cos x_2)\\
    & +
    2 (y_1 - y_2) (2 l_2 y_2 + l_1 (y_1 + y_2) \cos x_2) \cos(x_1+x_2)\\
    & + l_1 (y_1 + y_2)^2 \cos(x_1+x_2)^2)]/[4 l_1^2 l_2^2 m_2 (2 -
    \cos(2x_2) - \cos[2(x_1+x_2)])].
  \end{split}
  \]

  In these variables our system has invariant manifold
  \begin{equation}
    \widehat\scN=\left\{(x_1,x_2,y_1,y_2)\ |\ x_1=y_1=0\right\}.
  \end{equation} 
  Equations of motion restricted to this manifold read
  \begin{equation}
    \begin{split}
      \dot x_2&=-\dfrac{y_2 (l_1^2 + l_2^2 + 2 l_1 l_2 \cos x_2)}{2
        l_1^2
        l_2^2 m_2\sin^2x_2},\\
      \dot y_2&=\dfrac{y_2^2 (2 (l_1^2 + l_2^2) \cos x_2 + l_1 l_2 (3
        + \cos(2x_2)))}{4 l_1^2 l_2^2 m_2\sin^3x_2}.
      \label{eq:partki}
    \end{split}
  \end{equation} 
  These equations defines our particular solution and variational
  equations along it take the form
  \begin{equation}
    \begin{bmatrix}
      \dot X_1\\
      \dot X_2\\
      \dot Y_1\\
      \dot Y_2
    \end{bmatrix}
    =
    \begin{bmatrix}
      a_{11}&0&a_{13}&0\\
      a_{21}&a_{22}&a_{23}&a_{24}\\
      a_{31}&0&a_{33}&0\\
      a_{41}&a_{42}&0&a_{44}
    \end{bmatrix}
    \begin{bmatrix}
      X_1\\
      X_2\\
      Y_1\\
      Y_2
    \end{bmatrix},
  \end{equation}
  where
  \[
  \begin{split}
    &a_{11}=-\dfrac{y_2 (l_2 + l_1 \cos x_2)}{2 l_1 l_2^2 m_2\sin
      x_2},\quad
    a_{13}=-\dfrac{1}{l_2^2 m_2)},\\
    &a_{21}=-\dfrac{y_2 ((7 l_1^2 + 8 l_2^2) \cos x_2 + l_1 (10 l_2 +
      6 l_2 \cos(2x_2) + l_1 \cos(3x_2)))}{16 l_1^2 l_2^2 m_2\sin^3
      x_2},\\
    &a_{22}=-\dfrac{y_2 (2 (l_1^2 + l_2^2) \cos x_2 + l_1 l_2 (3 +
      \cos(2x_2)))}{2
      l_1^2 l_2^2 m_2\sin^3 x_2},\quad a_{23}=\dfrac{1}{2 l_2^2 m_2},\\
    &a_{24}=\dfrac{l_1^2 + l_2^2 + 2 l_1 l_2 \cos x_2}{2 l_1^2 l_2^2
      m_2\sin^2
      x_2},\\
    &a_{31}=-\dfrac{y_2^2 (10 l_1 l_2 \cos x_2 + 4 (l_1^2 + 2 l_2^2)
      \cos(2x_2) + l_1 (6 l_2 \cos(3x_2) + l_1 (3 + \cos(4x_2))))}{32
      l_1^2 l_2^2 m_2\sin^4
      x_2},\\
    &a_{33}=\dfrac{y_2 (l_2 + l_1 \cos x_2)}{2 l_1 l_2^2 m_2\sin x_2},\\
    &a_{41}=\dfrac{1}{2}a_{42}=-\dfrac{y_2^2 (23 l_1 l_2 \cos x_2 + 4
      (l_1^2 +
      l_2^2) (2 + \cos(2x_2)) +l_1 l_2 \cos(3x_2))}{16 l_1^2 l_2^2 m_2\sin^4 x_2},\\
    &a_{44}=\dfrac{y_2 (2 (l_1^2 + l_2^2) \cos x_2 + l_1 l_2 (3 +
      \cos(2x_2)))}{2 l_1^2 l_2^2 m_2\sin^3 x_2}
  \end{split}
  \]
  Thus we have the closed subsystem of normal variational equations
  \begin{equation}
    \dot X_1=a_{11}X_1+a_{13}Y_1,\quad \dot Y_1=a_{31}X_1+a_{33}Y_1.
    \label{eq:normki}
  \end{equation}
  This system can be rewritten as one second order linear equation
  \begin{equation}
    \ddot X_1+b(t)X_1=0,\qquad b(t)=-\dfrac{y_2^2 (l_1 + l_2 \cos x_2)^2}{4 l_1^2
      l_2^4 m_2^2\sin^4 x_2}.
    \label{eq:variaty2}
  \end{equation} 
  To rationalise this equation we use transformation
  \begin{equation}
    t\longmapsto z=\cos x_2(t).
    \label{eq:ratio1}
  \end{equation}
  This transformation together with the following expressions on time
  derivatives
  \[
  \begin{split}
    &\Dt=\dot z\Dz,\qquad \DDt=(\dot z)^2\DDz+\ddot z \Dz,\\
    &\dot z=-\sin x_2\,\dot x_2=-\dfrac{y_2 (l_1^2 + l_2^2 + 2 l_1 l_2
      \cos x_2)}{2 l_1^2 l_2^2 m_2\sin x_2},\qquad (\dot
    z)^2=-\dfrac{y_2^2 (l_1^2 + l_2^2 + 2 l_1 l_2 z)^2}{4 l_1^4 l_2^4
      m_2^2
      (-1 + z^2)},\\
    &\ddot z=-\cos x_2\, \dot x_2^2-\sin x_2\ddot x_2=-\dfrac{y_2^2
      (l_1^2 + l_2^2 + 2 l_1 l_2 z)}{4 l_1^3 l_2^3 m_2^2 (-1 + z^2)},
  \end{split}
  \]
  translates equation~\eqref{eq:variaty2} into the following one 
  \begin{equation}
    X''+p(z)X'+q(z)X=0,\qquad X=X_1,
    \label{eq:zvar1}
  \end{equation} 
  with rational coefficients
  \[
  \begin{split}
    & p=\dfrac{\ddot z}{\dot z^2}=\dfrac{1}{2(z-z_3)},\\
    &q=\dfrac{b}{\dot z^2}=\dfrac{1}{4(z-z_3)^2}+\dfrac{\alpha^2}{2 (1
      +\alpha)^2 (z-1)}-\dfrac{\alpha^2}{2 (\alpha-1)^2(
      z+1)}+\dfrac{2 \alpha^3}{(\alpha^2-1)^2(z-z_3)}.
  \end{split}
  \]
  Here $z_3=-(1 + \alpha^2)/(2\alpha)$ and $\alpha=l_1/l_2$. Equation
  \eqref{eq:zvar1} has four singularities
  \[
  z_{1,2}=\pm1,\qquad z_3=-\dfrac{1 + \alpha^2}{2\alpha},\qquad
  z_4=\infty.
  \]
  Using the transformation~\eqref{eq:tran} to \eqref{eq:zvar1} we
  obtain its reduced form
  \begin{equation}
    \label{eq:normal1}
    w'' = r(z) w, \qquad 
    r(z) = -q(z) + \frac{1}{2}p'(z)  + \frac{1}{4}p(z)^2,
  \end{equation}
  where the expansion of coefficient $r$ is the following
  \begin{equation}
    r=-\dfrac{7}{16(z-z_3)}-\dfrac{\alpha^2}{2 (\alpha+1)^2
      (z-1)}+\dfrac{\alpha^2}{2(\alpha-1)^2 (z+1)}-\dfrac{2
      \alpha^3}{(\alpha^2-1)^2(z-z_3)}
  \end{equation} 
  and order of infinity is 2 provided $\alpha\neq 0$.

  At first let us assume that $\alpha\neq 1$.  To check the differential
  Galois group of equation~\eqref{eq:normal1} we apply the Kovacic
  algorithm.  In the first case sets $E_i$ are following
  \begin{equation}
    E_1=E_2=\left\{0,1\right\}, \qquad
    E_3=E_4=\left\{\dfrac{1}{2}-\rmi\dfrac{\sqrt{3}}{4},\dfrac{1}{2}+\rmi\dfrac{
        \sqrt { 3 } }{4}\right\}.
  \end{equation} 
  Next, we select from the Cartesian product $E=\prod_{i=1}^4E_i$
  those elements $e=(e_1,e_2,e_3,e_4)$ for which
  \begin{equation}
    d(e)=e_4-\sum_{i=1}^3e_i\in\N_0.
  \end{equation}
  We have two choices of $e\in E$, namely,
  \begin{equation}
    e^{(1)}=\left(0,0,\dfrac{1}{2}-\rmi\dfrac{\sqrt{3}}{4},\dfrac{1}{2}-
      \rmi\dfrac{\sqrt{3}}{4}\right), \quad
    e^{(2)}=\left(0,0,\dfrac{1}{2}+\rmi\dfrac{\sqrt{3}}{4},\dfrac{1}{2}+
      \rmi\dfrac{\sqrt{3}}{4}\right),
  \end{equation}
  that give $d(e^{(i)})=0$ for $i=1,2$. Then we construct rational
  functions
  \[
  \omega=\sum_{i=1}^3\dfrac{e_i}{z-z_i}
  \]
  and substitute into equation~\eqref{eq:P} setting also $P=1$.  This
  equation is satisfied only for $\alpha=0$ for both choices
  $e^{(i)}$, $i=1,2$.

  In the second case of the Kovacic algorithm sets $E_i$ are the
  following
  \[
  E_1=E_2=\{4\},\qquad
  E_3=E_4=\left\{2-\rmi\sqrt{3},2,2+\rmi\sqrt{3}\right\}\cap\Z=\{2\}.
  \]
  Thus we see that there is no element $e=(e_1,e_2,e_3,e_4)$ for that 
  \begin{equation}
    d(e)=\dfrac{1}{2}\left(e_4-\sum_{i=1}^3e_i\right)\in\N_0.
  \end{equation}
  Similarly in the third case of this algorithm sets $E_i$ are
  following
  \[
  E_1=E_2=\{12\}, \quad E_3=E_4=\left\{6+\dfrac{6k\rmi\sqrt{3}}{n}\ |\
    k=0,\pm1,\pm2,\ldots,\pm\dfrac{n}{2} \right\}\cap\Z=\{6\}
  \]
  where $n=4,6,12$. But we see that also there is no
  $e=(e_1,e_2,e_3,e_4)$ with $e_i\in E_i$ given above for that
  \[
  d(e)=\dfrac{n}{12}\left(e_4-\sum_{i=1}^3e_i\right)\in\N_0.
  \]
  Thus differential Galois group is whole $\mathrm{SL}(2,\C)$.

  For $\alpha=1$ we have a confluence of singular points.  In this
  case coefficient $r$ simplifies to the following forms
  \[
  r=-\dfrac{1 + 7 z}{16 (z-1) (1 + z)^2}.
  \]
  We note that in these cases equation \eqref{eq:normal1} becomes a
  Riemman $P$ equation.  With a help of the Kimura
  theorem~\cite{Kimura:69::} one can show that the differential Galois
  group of this equation is $\mathrm{SL}(2,\C)$.
\end{proof}

The second part of our main Theorem~\ref{thm:main} is an immediate
consequence of Theorem~\ref{thm:MR} and Lemmas~\ref{lem:gen},
\ref{lem:gen1} and \ref{lem:gen2}.
\section{Conclusions}
\label{sec:concl}
The integrability analysis for the flail pendulum with parameters
satisfying
\begin{equation}
  m_2l_2=m_3l_3
  \label{eq:equalit}
\end{equation}
was almost finished except the case when $m_1=0$, $m_2\neq m_3$ and
$l_2\neq l_3$. The only integrable case  was found  for $l_1=0$. 

There is an open question about integrability in the case when
equality \eqref{eq:equalit} does not hold. Poincar\'e sections suggest
non-integrability for generic values of parameters.
\begin{figure}[htp]
  \begin{center}
    \includegraphics[scale=0.165]{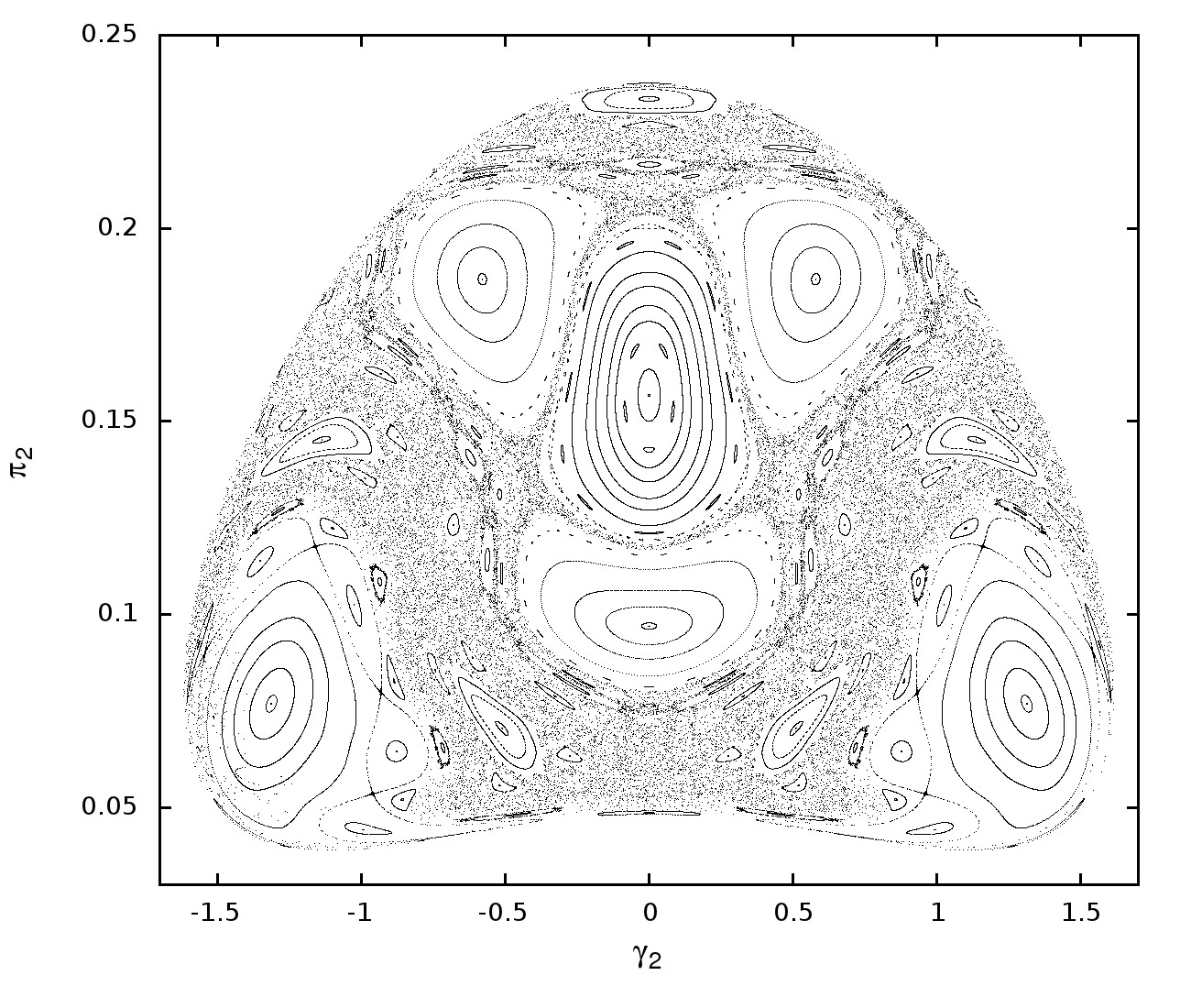}
    \includegraphics[scale=0.165]{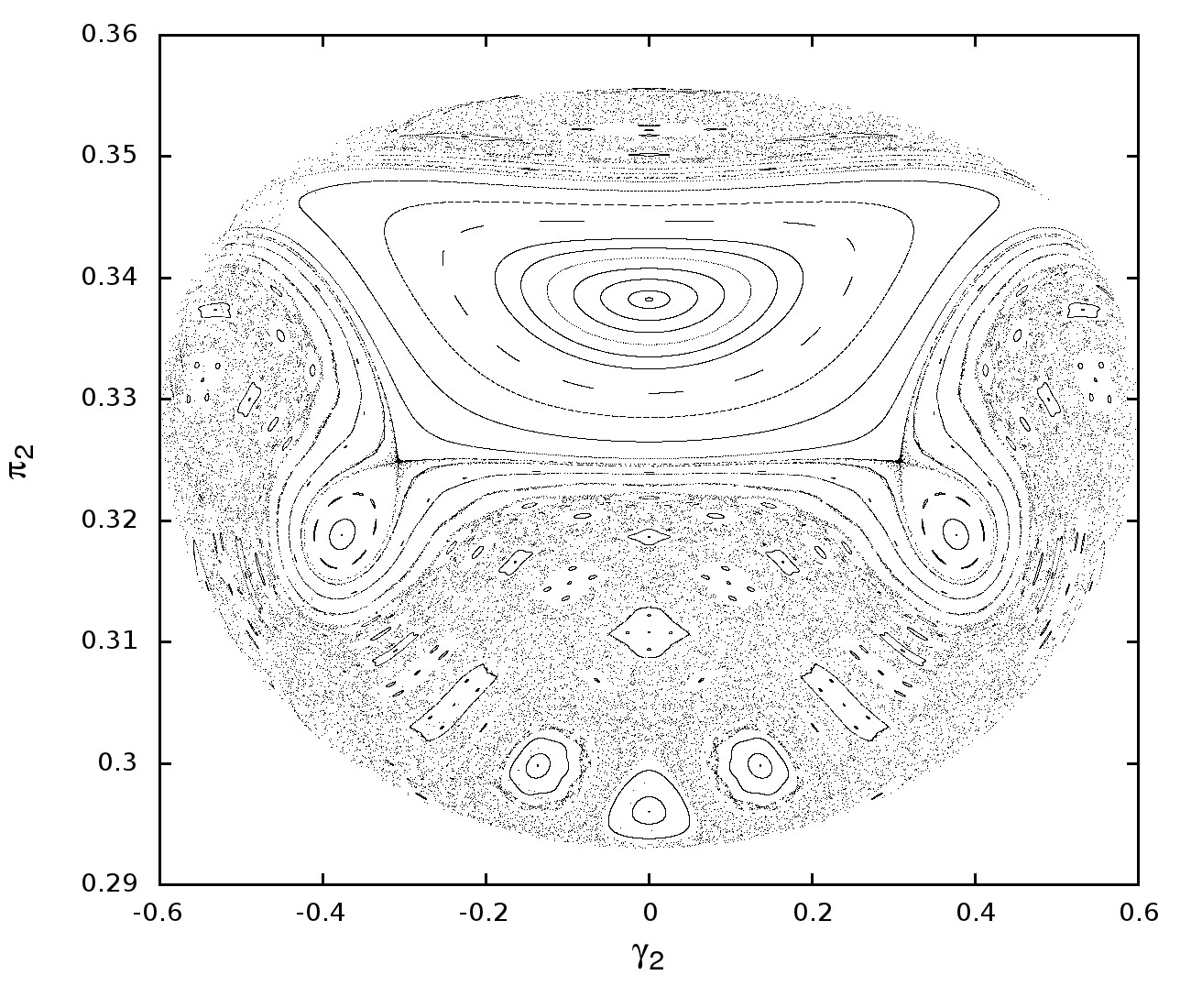}
\caption{\small The Poincar\'e  sections for 
$m_1 =
 m_2 = m_3 = 1,\,  l_1 = l_3=1,\,  l_2 =
2,\, \pi_3 = c = 1$ and  $E = 0.042$  (on the left)  and $ m_1
=
 m_2 =1,\,  m_3 = 2,\,  l_1 = l_2=1,\,  l_3 =
3,\, \pi_3 = c = 1$ and  $E
= 0.00358$  (on the right). Cross-section plane $\gamma_1=0$
\label{l55}}
\end{center}
\end{figure}
\section*{Acknowledgments}
The authors are very grateful to Andrzej J. Maciejewski for many helpful
comments and suggestions concerning improvements and simplifications of some
results.

\appendix
\section{ Explicit forms of reduced equations and variational
equations}
\label{sec:appendix}
The explicit form of vector field $\vv=[v_1,v_2,v_3,v_4]^T$ in
Section~\ref{sec:nierowne} defined by reduced equations of motion
\eqref{eq:systfin} after linear transformation \eqref{eq:trafononcan} is the
following
\begin{equation}
 \begin{split}
 & v_1=\left[2 l_2^2 l_3 m_2 m_3 (c l_3 + l_2 y_2 - l_3 (y_1 + y_2)) - 
   l_1^2 (m_1 + m_2 + m_3) (l_2^2 m_2 y_2 + l_3 m_3 (l_3 y_1 - l_2
y_2))\right.\\
&\left. + 
   l_1 (l_1 l_3 m_3^2 (l_3 y_1 - l_2 y_2) \cos (x_1-x_2)^2 + 
      l_2 m_2 \cos x_2 (l_3 m_3 (l_3 (c - 3 y_1 - y_2) + 3 l_2 y_2)\right.\\
&\left. + 
         l_1 l_2 m_2 y_2 \cos x_2) + 
      l_2 m_2 m_3 \cos (x_1-x_2) (l_2 (l_3 (c - y_1 - 3 y_2) + l_2 y_2) - 
         l_1 (-l_2 y_2\right.\\
&\left. + l_3 (y_1 + y_2)) \cos x_2))\right]/\left[l_1^2
l_2^2 l_3^2 (-m_2 m_3 (m_1 + m_2 + m_3) + 
     m_2 m_3 (m_3 \cos (x_1-x_2)^2 + m_2 \cos x_2^2))\right],\\
&v_2=\left[-l_1^2 (m_1 + m_2 + m_3) (l_3 y_1 - l_2 y_2) + 
   l_2^2 m_2 (c l_3 + l_2 y_2 - l_3 (y_1 + y_2)) + 
   l_1 (l_1 m_3 (l_3 y_1\right.\\
&\left. - l_2 y_2) \cos (x_1-x_2)^2 + 
      l_2 m_2 (c l_3 + 2 l_2 y_2 - l_3 (2 y_1 + y_2)) \cos x_2 - 
      l_2 m_2 y_2 \cos (x_1-x_2)\right.\\
&\left.\times (l_2 + l_1 \cos x_2))\right]/\left[l_1^2 l_2^2
(-l_3 m_2 (m_1 + m_2 + 
        m_3) + l_3 m_2 (m_3 \cos (x_1-x_2)^2 + m_2 \cos x_2^2))\right]\\
&v_3=\left[(l_2 (-l_2 y_2 + l_3 (-c + y_1 + y_2)) + l_1 l_2 y_2 \cos (x_1-x_2)
+ 
     l_1 (l_3 y_1 - l_2 y_2) \cos x_2)\right.\\
&\left.\times (l_1 (l_2 (l_3 m_2 + l_2 (4 m_1 +
m_2 + 4 m_3)) y_2 - 
        2 (l_2 - l_3) m_3 (-l_3 y_1 + l_2 y_2) \cos x_1) \cos x_2 \sin
x_1\right.\\
&\left. - 
     l_1 (l_2 + l_3) (l_2 m_2 y_2 + 2 m_3 (-l_3 y_1 + l_2 y_2) \cos x_1) \cos
(3x_2) \sin x_1 + 2 l_2^2 m_3 (-l_2 y_2 + l_3 (-c + y_1\right.\\
&\left. + y_2)) \cos
(2x_2) \sin (2x_1) + 
     l_1 (l_2 (l_3 m_2 - l_2 (4 m_1 + 3 m_2 + 4 m_3)) y_2 \cos x_1 + (l_3 y_1 - 
           l_2 y_2)\right.\\
&\left.\times (2 l_3 (2 (m_1 + m_2) + m_3) + (-l_2 + l_3) m_3
\cos (2x_1))) \sin x_2 + 2 l_2 (c l_3 + l_2 y_2 - l_3 (y_1 + y_2))\right.\\
&\left.\times
(-l_3 m_2 + l_2 m_3 \cos (2x_1)) \sin (2x_2) + 
     l_1 (l_2 + l_3) (l_2 m_2 y_2 \cos x_1 + m_3 (-l_3 y_1 + l_2 y_2) \cos
(2x_1)) \right.\\
&\left.
\times\sin (3x_2))\right]/\left[l_1^2 l_2^2 l_3^3 (2 m_1 + m_2 + m_3 - m_3
\cos[2 (x_1 - x_2)] - m_2 \cos (2x_2))^2\right]\\
&v_4=\left[2 (l_2 (-l_2 y_2 + l_3 (-c + y_1 + y_2)) + l_1 l_2 y_2 \cos (x_1-x_2)
+ 
     l_1 (l_3 y_1 - l_2 y_2) \cos x_2)\right.\\
&\left.\times (2 m_3 \cos (x_1-x_2) (l_2 (-l_2
y_2 + l_3 (-c + y_1 + y_2)) +l_1 (l_3 y_1 - l_2 y_2) \cos x_2) + l_1 l_2 y_2 (2
m_1\right.\\
&\left.+ m_2 + 2 m_3 - m_2 \cos (2x_2))) \sin (x_1-x_2)\right]/\left[l_1^2
l_2^2 l_3^2 (2 m_1 + m_2 + m_3 - m_3 \cos[2 (x_1 - x_2)]\right.\\
&\left. -  m_2 \cos
(2x_2))^2\right]
 \end{split}
\label{eq:ukk}
\end{equation} 

Entries of matrix in variational equations~\eqref{eq:wariaty1} are the following
\begin{equation}
\begin{split}
&a_{11}=\dfrac{c (2 l_2 l_3 + l_1 (l_2 + l_3) \cos x_2) \sin x_2}{l_1 l_2 (l_2 -
l_3) (-l_2 m_2 -l_3 (m_1 + m_2) + (l_2 + l_3) m_2 \cos x_2^2)},\\
&a_{13}=-\dfrac{2 l_2^2 l_3 m_2 + l_1^2 (l_3 m_1 + (l_2 + l_3) m_2) + 
 l_1 l_2 (l_2 + 3 l_3) m_2 \cos x_2}{l_1^2 l_2^2 m_2 (-l_2 m_2 - 
   l_3 (m_1 + m_2) + (l_2 + l_3) m_2 \cos x_2^2)},\\
&a_{14}=\dfrac{(l_2 - l_3) (2 l_2 l_3 + l_1 (l_2 + l_3) \cos x_2)}{l_1^2 l_2 l_3
(-l_2 m_2 - l_3 (m_1 + m_2) + (l_2 + l_3) m_2 \cos x_2^2)},\\
&a_{21}=\dfrac{c l_3 (l_2 + l_1 \cos x_2) \sin x_2}{l_1 l_2 (l_2 - l_3) (-l_2
m_2 - 
   l_3 (m_1 + m_2) + (l_2 + l_3) m_2 \cos x_2^2)},\\
&a_{23}=\dfrac{-l_2^2 l_3 m_2 - l_1^2 (l_2 m_2 + l_3 (m_1 + m_2)) + 
 l_1 l_2 m_2 \cos x_2 (-2 l_3 + l_1 \cos x_2)}{l_1^2 l_2^2 m_2 (-l_2 m_2 - 
   l_3 (m_1 + m_2) + (l_2 + l_3) m_2 \cos x_2^2)},\\
&a_{24}=-\dfrac{1}{l_2 l_3 m_2}+\dfrac{(l_3 - l_2) (l_2 + l_1 \cos x_2)}{l_1^2
l_2 (l_2 m_2 +l_3 (m_1 + m_2) - (l_2 + l_3) m_2 \cos x_2^2)},\\
&a_{31}=\dfrac{2 c^2 (l_2 + l_3) \sin x_2^2}{(l_2 -l_3)^2 (-2 l_3 m_1 - (l_2 +
l_3) m_2 + (l_2 + l_3) m_2 \cos(2x_2))},\\
&a_{33}=-\dfrac{2 c (l_2 + l_3) (l_2 + l_1 \cos x_2) \sin x_2}{l_1 l_2 (l_2 -
l_3) (-2 l_3 m_1 - (l_2 + l_3) m_2 + (l_2 + l_3) m_2 \cos(2x_2))},\\
&a_{34}=\dfrac{2 c (l_2 + l_3) \sin x_2}{l_1 l_3 (-2 l_3 m_1 - (l_2 + l_3) m_2
+ (l_2 + l_3) m_2 \cos(2x_2))},\\
&a_{41}=\dfrac{2 c^2 l_3 \sin x_2^2}{(l_2 - 
   l_3)^2 (-2 l_3 m_1 - (l_2 + l_3) m_2 + (l_2 + l_3) m_2 \cos(2x_2))},\\
&a_{43}=-\dfrac{2 c l_3 (l_2 + l_1 \cos x_2) \sin x_2}{l_1 l_2 (l_2 - 
   l_3) (-2 l_3 m_1 - (l_2 + l_3) m_2 + (l_2 + l_3) m_2 \cos(2x_2))},\\
&a_{44}=\dfrac{2 c \sin x_2}{l_1 (-2 l_3 m_1 - (l_2 + l_3) m_2 + (l_2 + l_3) m_2
\cos(2x_2))}.
\end{split}
\label{eq:wariaty1_entries}
\end{equation}

\section{Linear second order differential equation with rational
coefficients and Kovacic algorithm}
\label{app:kov}
Let us consider a linear second order differential equation with rational
coefficients
\begin{equation}
\label{eq:agl}
w'' + p(z) w' + q(z) w =0, \qquad p(z), q(z)\in \C(z).
\end{equation}
A point $z=c\in\C$ is a singular point of this equation if it is a
pole of $p(z)$ or $q(z)$. A singular point is a \emph{regular
  singular} point if at this point $\tilde p(z)=(z-c)p(z)$ and $\tilde
q(z)=(z-c)^2q(z)$ are holomorphic. An \emph{exponent} of
equation~\eqref{eq:agl} at point $z=c$ is a solution of the
\emph{indicial equation}
\[
   \rho(\rho-1) +p_0\rho + q_0 = 0, \qquad p_0=\tilde p(c), \quad
  q_0=\tilde q(c).
\]
After a change of the dependent variable $z\rightarrow 1/z$
equation~\eqref{eq:agl} reads
\begin{equation}
\label{eq:infagl}
\begin{split}
&w'' + P(z) w' + Q(z) w =0, \\
& P(z)=-\frac{1}{z^2}p\left(\frac{1}{z}\right) +\frac{2}{z}, \qquad
Q(z)= \frac{1}{z^4}q\left(\frac{1}{z}\right) .
\end{split}
\end{equation}
We say that the point $z=\infty$ is a singular point for
equation~\eqref{eq:agl} if $z=0$ is a singular point
of~equation~\eqref{eq:infagl}. Equation~\eqref{eq:agl} is called
\emph{Fuchsian} if all its singular points (including infinity) are
regular, see \cite{Whittaker:35::,Ince:44::}.

The Kovacic algorithm \cite{Kovacic:86::} allows to decide whether all
solutions of  equation~\eqref{eq:agl} with rational coefficients
$p(z)$ and $q(z)$ are Liouvillian.  Roughly speaking, Liouvillian
functions are obtained from the rational functions by a finite sequence of
admissible operations:
 solving algebraic equations, integration and taking exponents of
integrals. For a formal definition, see e.g., \cite{Kovacic:86::}. 

If one (non-zero) solution $w_1$ of equation~\eqref{eq:agl} is
Liouvillian, then all its solutions are Liouvillian.  In fact, the
second solution $w_2$, linearly independent from $w_1$, is given by
\[
w_2 = w_1\int \frac{1}{w_1^2}\exp\left[-\int p\right].
\]
Putting
\[
   w = y \exp\left[-\frac{1}{2}\int p \right]
\]
into equation~\eqref{eq:agl}, we obtain its \emph{reduced form}
\begin{equation}
\label{eq:gso}
 y''=r(z) y, \qquad r(z) = -q(z) + \frac{1}{2}p'(z)  + \frac{1}{4}p(z)^2.
\end{equation}
This change of variable does not affect the Liouvillian nature of the
solutions.  For equation~\eqref{eq:gso}, its differential Galois group
$\cG$ is an algebraic subgroup of $\mathrm{SL}(2,\C)$. The following
lemma describes all possible types of $\cG$ and relates these types to the
forms of a solution of \eqref{eq:gso}, see
\cite{Kovacic:86::,Morales:99::c}.
\begin{lemma}
\label{lem:alg}
Let $\cG$ be the differential Galois group of equation~\eqref{eq:gso}.
Then one of four cases can occur.
\begin{description}
\item[Case I] $\cG$ is conjugate to a subgroup of the triangular group
\[
\cT = \left\{ \begin{bmatrix} a & b\\
                                0 & a^{-1}
                      \end{bmatrix}  \; \biggl| \; a\in\C^*, b\in\C\right\} ;
\]
in this case equation \eqref{eq:gso} has an exponential solution of
the form $y=P\exp\int\omega$, where $P\in\C[z]$ and $\omega\in\C(z)$,
\item[Case II] $\cG$ is conjugate to a subgroup of
\[
\cD^\dag = \left\{ \begin{bmatrix} c & 0\\
                                0 & c^{-1}
                      \end{bmatrix}  \; \biggl| \; c\in\C^*\right\} \cup
                      \left\{ \begin{bmatrix} 0 & c\\
                                -c^{-1} & 0
                      \end{bmatrix}  \; \biggl| \; c\in\C^*\right\};
\]
  in this case equation
  \eqref{eq:gso} has a  solution of the form $y=\exp\int \omega$, where
  $\omega$ is algebraic over $\C(z)$ of degree 2,
\item[Case III] $\cG$ is primitive and finite; in this case all
  solutions of equation~\eqref{eq:gso} are algebraic, thus
  $y=\exp\int\omega$, where $\omega$ belongs to an algebraic extension
  of ${\C}(z)$ of degree $n=4,6$ or 12.

\item[Case IV] $\cG= \mathrm{SL}(2,\C)$ and equation \eqref{eq:gso}
  has no Liouvillian solution.
\end{description}
\end{lemma}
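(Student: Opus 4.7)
The plan is to combine two classical ingredients: the classification of Zariski-closed subgroups of $\mathrm{SL}(2,\C)$, and the Picard-Vessiot correspondence that translates the group-theoretic structure of $\cG$ into information on the form of solutions of equation~\eqref{eq:gso}. I would begin by verifying the inclusion $\cG\subseteq\mathrm{SL}(2,\C)$. For any two solutions $y_1,y_2$ of the reduced equation the Wronskian $W=y_1 y_2'-y_2 y_1'$ satisfies $W'=y_1 y_2''-y_2 y_1''=r(z)(y_1 y_2-y_2 y_1)=0$, so $W$ is a nonzero constant. Since this constant is fixed by the Galois action, every $\sigma\in\cG$, acting on the two-dimensional solution space $V$, has determinant $1$.

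Next I would run the classification of algebraic subgroups of $\mathrm{SL}(2,\C)$ by a dichotomy-and-subdichotomy on the action on $V=\C^2$. If some line in $V$ is $\cG$-invariant, then in a suitable basis $\cG$ is upper-triangular, giving Case I. Otherwise $V$ is irreducible under $\cG$; if $\cG$ preserves an unordered pair of lines, a change of basis puts $\cG$ inside $\cD^\dag$, giving Case II. In the remaining primitive irreducible case, the connected component $\cG^\circ$ also acts irreducibly; since every proper connected algebraic subgroup of $\mathrm{SL}(2,\C)$ is solvable and hence reducible by the Lie--Kolchin theorem, we must have either $\cG^\circ=\mathrm{SL}(2,\C)$ (Case IV) or $\cG^\circ=\{I\}$, in which case $\cG$ is finite (Case III). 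For the finite primitive case I would cite the classical Jordan--Klein classification: the image of $\cG$ in $\mathrm{PSL}(2,\C)$ is isomorphic to $A_4$, $S_4$, or $A_5$, and these groups have distinguished orbits on $\C\mathbb{P}^1$ of sizes $4$, $6$, $12$ respectively.

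Finally I would use the Galois correspondence, under which $\C(z)$ is exactly the fixed field of $\cG$, to read off the solution shape in each case. In Case I, an invariant line $\C y_1\subseteq V$ makes $y_1'/y_1$ $\cG$-invariant, so $y_1'/y_1=\omega\in\C(z)$ and $y_1=\exp\int\omega$; allowing a polynomial prefactor $P$ is a convenience for Kovacic's algorithm (absorbing factors dictated by local exponents at the finite singularities). In Case II, the logarithmic derivatives $\omega_i=y_i'/y_i$ attached to the two invariant lines are permuted by $\cG$ and jointly fixed, hence are the roots of a degree-two polynomial over $\C(z)$, so $\omega$ is algebraic of degree $2$. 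In Case III, finiteness forces every solution to be algebraic over $\C(z)$; then $\omega=y_1'/y_1$ lies in the fixed field of the stabilizer of the chosen line, whose index in $\cG$ equals the orbit size $n\in\{4,6,12\}$, so $\omega$ is algebraic of that degree. In Case IV the $\mathrm{SL}(2,\C)$-invariants on $V$ yield no nontrivial rational functions, ruling out Liouvillian solutions altogether.

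The technical obstacle concentrates in the classification of algebraic subgroups of $\mathrm{SL}(2,\C)$, in particular in the finite primitive case and the matching of the indices $4$, $6$, $12$ to the orbit sizes of the binary polyhedral groups. Since this is long-settled work of Jordan and Klein, I would cite those sources rather than reprove them; the remaining content of the lemma is then the translation from group structure to explicit form of $\omega$ via the Galois correspondence, which is straightforward once the group-theoretic picture is in place.
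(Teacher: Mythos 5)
The paper itself offers no proof of this lemma: it is quoted as a known classification with a citation to Kovacic's paper and Morales-Ruiz's book, so there is no internal argument to compare against. Your proposal reconstructs the standard proof from those sources — Wronskian argument to get $\cG\subseteq\mathrm{SL}(2,\C)$, classification of algebraic subgroups of $\mathrm{SL}(2,\C)$ by (ir)reducibility and (im)primitivity, then the Galois correspondence to read off the shape of $\omega=y'/y$ in each case. That is the right strategy, and your handling of Cases I and II, and of the degrees $4,6,12$ via the binary polyhedral groups, is sound.

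There is, however, a genuine gap in the primitive case. You assert that ``the connected component $\cG^\circ$ also acts irreducibly,'' which is false exactly in Case III: for $\cG$ finite primitive one has $\cG^\circ=\{I\}$, and the trivial group acts reducibly (every line is invariant). As written your dichotomy is internally inconsistent — if $\cG^\circ$ really acted irreducibly, your Lie--Kolchin/solvability observation would force $\cG^\circ=\mathrm{SL}(2,\C)$ outright, and the alternative $\cG^\circ=\{I\}$ could never arise. The missing ingredient is normality: $\cG^\circ$ is a normal subgroup of $\cG$. If $\cG^\circ$ were a nontrivial \emph{proper} connected subgroup, it would be solvable, hence (being non-scalar, since it is connected and nontrivial) would have either exactly one or exactly two invariant lines in $V$; normality makes this finite nonempty set of lines $\cG$-stable, so $\cG$ would be reducible or imprimitive, contradicting primitivity. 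Only then does the dichotomy $\cG^\circ=\{I\}$ (Case III) or $\cG^\circ=\mathrm{SL}(2,\C)$ (Case IV) follow. A second, smaller gap is in Case IV: the remark that $\mathrm{SL}(2,\C)$-invariants yield no nontrivial rational functions does not rule out Liouvillian solutions. What is needed is Kolchin's theorem that the Picard--Vessiot extension is Liouvillian if and only if $\cG^\circ$ is solvable; since $\mathrm{SL}(2,\C)$ is not solvable, no solution is Liouvillian. Both gaps are repairable by standard citable facts, but as written the Case III/Case IV split does not go through.
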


Kovacic in paper \cite{Kovacic:86::} have formulated the necessary conditions
for the respective cases from Lemma~\ref{lem:alg} to hold.

At first we introduce notation.
We write $r(z)\in\mathbb{C}(z)$ in the form
\begin{equation*}
r(z) = \frac{s(z)}{t(z)}, \qquad s(z),\, t(z) \in \mathbb{C}[z],
\end{equation*}
where $s(z)$ and $t(z)$ are relatively prime polynomials and $t(z)$ is
monic.  The roots of $t(z)$ are poles of $r(z)$. We denote  $\Sigma':=
\{ c\in\mathbb{C}\,\vert\, t(c) =0 \}$ and  $\Sigma:=\Sigma'\cup\{\infty\}$.
 The
order $\mathrm{ord}(c)$ of $c\in\Sigma'$ is equal to the multiplicity of $c$
as a root of $t(z)$, the order of infinity is defined by
\[
\mathrm{ord}(\infty):=\deg t - \deg s.
\]

\begin{lemma}
\label{lem:neces}
The  necessary conditions for the respective cases
  in  Lemma~\ref{lem:alg} are the following.
  \begin{description}
\item[Case~I.] Every pole of $r$ must have even order or else have order~1.
 The  order of $r$ at $\infty$ must be even or else be greater than~2.
\item[Case~II.] $r$ must have at least one pole that either has odd order
greater than~2 or else has order~2.
\item[Case~III.] The order of a pole  of $r$ cannot
  exceed~2 and the order of $r$ at $\infty$ must be at least 2. If the partial
fraction expansion of $r$ is
\begin{equation}
\label{eq:r}
r(z)=\sum_{i}\dfrac{a_i}{(z-c_i)^2}+
\sum_{j}\dfrac{b_j}{z-d_j},
\end{equation}
then  $\Delta_i=\sqrt{1+4a_i}\in\mathbb{Q}$, for each $i$, $\sum_{j}b_j=0$
and if
\begin{equation*}
g=\sum_ia_i+\sum_jb_jd_j,
\end{equation*}
then $\sqrt{1+4g}\in\mathbb{Q}$.
\end{description}
\label{kovacic}
\end{lemma}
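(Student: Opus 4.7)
The plan is to derive each of the three necessary conditions directly from the structure of the Galois group asserted in Lemma~\ref{lem:alg}, following the strategy of Kovacic~\cite{Kovacic:86::}. In every case the analysis reduces to a local computation on the logarithmic derivative $\omega=y'/y$ of the postulated Liouvillian solution, carried out at each pole of $r(z)$ and at $z=\infty$.

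For Case~I the hypothesis yields an exponential solution $y=P\exp\int\omega$ with $\omega\in\C(z)$; setting $u=y'/y=P'/P+\omega\in\C(z)$ one obtains the Riccati equation $u'+u^2=r$. A Laurent expansion at a pole $c$ of $r$ shows that if $u$ has a pole of order $k\geq 1$ at $c$, then $u^2$ dominates and $r$ has a pole of even order $2k$ there; otherwise $r$ can have at most a simple pole. The same analysis at $\infty$ (after $z\mapsto 1/z$) yields the stated condition on $\ord_\infty r$. For Case~II the logarithmic derivative $\omega$ is algebraic of degree exactly~$2$ over $\C(z)$, so the Galois conjugate $\bar\omega$ also satisfies $\bar\omega'+\bar\omega^2=r$ and the elementary symmetric functions of $\omega,\bar\omega$ are rational. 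A pole-by-pole analysis of the ramification of $\omega$ shows that if no pole of $r$ has either odd order $\geq 3$ or order exactly $2$, then the relevant discriminant is a rational square and $\omega$ extends rationally, falling back to Case~I and contradicting the degree-$2$ hypothesis.

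For Case~III both solutions are algebraic, so $\omega$ is algebraic and admits a Puiseux expansion at each singular point with rational leading exponent. At a double pole $c_i$ of $r$ the exponents of \eqref{eq:gso} are $\tfrac12(1\pm\sqrt{1+4a_i})$, so rationality forces $\Delta_i=\sqrt{1+4a_i}\in\Q$; higher-order poles would produce exponents incompatible with algebraicity, hence all poles of $r$ have order at most~$2$, and the same argument at $\infty$ gives $\ord_\infty r\geq 2$. The conditions $\sum_j b_j=0$ and $\sqrt{1+4g}\in\Q$ then follow from the Fuchsian residue relation and the fact that the exponent of~\eqref{eq:gso} at $\infty$ equals $\tfrac12(1\pm\sqrt{1+4g})$, which must again be rational.

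The main obstacle is Case~III: the purely local computation does not suffice, and one must combine the Puiseux analysis at every finite pole with the global residue relation at $\infty$ in order to extract the condition on $g$. Moreover one has to verify that these rationality constraints are actually forced by algebraicity of $\omega$ rather than merely by finiteness of the monodromy; for this step the classification of finite primitive subgroups of $\mathrm{SL}(2,\C)$ enters, and the quantitative bookkeeping then proceeds exactly as in~\cite{Kovacic:86::}.
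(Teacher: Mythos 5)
The paper does not actually prove this lemma---it is quoted as a known result from Kovacic's 1986 paper with only a citation---so the only meaningful comparison is with the cited source, and your outline follows exactly that source's method: a local Laurent/Puiseux analysis of the Riccati equation $u'+u^2=r$ satisfied by the logarithmic derivative $u=y'/y$ of the Liouvillian solution at each pole and at $\infty$. The sketch is correct in substance (Case I and Case III are essentially complete, with the order-at-infinity computation and the identification of $\sum_j b_j$ and $g$ as the $1/z$ and $1/z^2$ coefficients doing the work); the only soft spots are that Case II is asserted rather than carried out---one still has to show that a degree-2 ramification point of $\omega$ can sit only over a pole of order $2$ or of odd order $>2$, and then invoke the fact that an unramified degree-2 extension of $\C(z)$ is trivial---and that your appeal to the classification of finite primitive subgroups of $\mathrm{SL}(2,\C)$ in Case III is unnecessary for these particular necessary conditions, since finiteness of the Galois group already makes every solution algebraic and hence forces rational local exponents directly.
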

In \cite{Kovacic:86::} Kovacic also formulated a procedure, called now
the Kovacic algorithm, which allows to decide if an equation of the
form~\eqref{eq:gso} possesses a Liouvillian solution and to find it in a
constructive way.  Applying it, we also obtain information about the
differential Galois group of the analysed equation.  Beside the original
formulation of this algorithm several its versions and
improvements are known \cite{Duval:92::,Maciejewski:01::,Morales:99::c}.

Now we describe the Kovacic algorithm for the respective cases
 from Lemma~\ref{lem:alg}.\\[\bigskipamount]
\noindent
\textsc{Case I}

\noindent
\textbf{Step I.} 
Let $\Gamma$ is the set of poles of $r$. For each $c\in\Gamma\cup\{\infty\}$ we
define a rational function $[\sqrt{r}]_c$ and two complex numbers
$\alpha_c^{+}$, $\alpha_c^{-}$ as described below.
\begin{description}
\item[(c$_1$)] If $c\in\Gamma$ and $\ord(c)=1,$ then 
\[
[\sqrt{r}]_c=0,\qquad \alpha_c^{+}=\alpha_c^{-}=1.
\]
\item[(c$_2$)] If $c\in\Gamma$ and $\ord(c)=2,$ then
\[
[\sqrt{r}]_c=0.
\]
If $r$ has the  expansion of the form \eqref{eq:r}, then
\begin{equation}
\alpha_c^{\pm}=  \frac{1}{2}\left(1\pm\sqrt{1+4a}\right)
\label{eq:del1s}
\end{equation}
\item[(c$_3$)]
If $c\in\Gamma$ and $\ord{c}=2\nu\geq4 $ with $k\ge 2$ (only even orders are
admissible in
this case), then
\begin{align*}
  \left[\sqrt{r}\right]_c&= \frac{a}{(z-c)^{\nu}} + \cdots+\frac{d}{(z-c)^2},\\
  r - \left[\sqrt{r}\right]_c^2 &= \frac{b}{(z-c)^{\nu+1}} +
  O\left(\frac{1}{(z-c)^\nu}\right),
\end{align*}
and
\[
\alpha_c^{\pm}= \frac{1}{2}\left(
 \pm\frac{b}{a}+\nu\right).
\]
\item[($\infty_1$)] If the order of $r$ at infinity is $>2$, then
\[
[\sqrt{r}]_{\infty}=0,\qquad \alpha_{\infty}^{+}=0,\quad \alpha_{\infty}^{-}=1.
\]
\item[($\infty_2$)] If the order of $r$ at infinity is 2, then
\[
[\sqrt{r}]_{\infty}=0.
\]
If the Laurent series expansion of $r$ at $\infty$ takes the form
\begin{equation}
\label{eq:Linfty}
r =  \frac{a}{z^2} +
 O\left(\frac{1}{z^3}\right),
\end{equation}
then
\begin{equation}
\alpha_{\infty}^{\pm}=  \frac{1}{2}\left(1\pm\sqrt{1+4a}\right).
\label{eq:delinf}
\end{equation}
\item[($\infty_3$)] If the order of $r$ at $\infty$ is $-2\nu\leq0$ (necessarily
even in this case), then 
\begin{equation}
 \sqrt{r}_{\infty}=az^{\nu}+\cdots+s,\qquad 
\label{eq:inftyexp}
\end{equation} 
is the indicated part of the Laurent series expansion of $\sqrt{r}$ at $\infty$
and
\begin{equation*}
  r - \left[\sqrt{r}\right]_\infty^2 = b z^{\nu-1} +
  O\left(z^{\nu-2}\right).
\end{equation*}
Then 
\begin{equation}
\label{eq:ainf}
\alpha_{\infty}^{\pm}= \frac{1}{2}\left(
 \pm\frac{b}{a}-\nu\right).
\end{equation}
\end{description}

\textbf{Step II.} For each family $s=(s(c),s(\infty))$, $c\in\Gamma$, where
$s(c)$ and $s(\infty)$ are either $+$ either $-$, let
we compute
\[
  d := e_\infty^{s(\infty)}- \sum_{c\in\Gamma}\alpha_c^{s(c)}.
\]
If $d$ is a non-negative integer, then
\begin{equation}
 \omega(z) =
\sum_{c\in\Gamma}\left(\mathrm{s}(c)\left[\sqrt{r}\right]_c+\frac{\alpha_c^{
s(c)}}{z-c}
\right) +
\mathrm{s}(\infty)\left[\sqrt{r}\right]_{\infty},
\label{eq:t1}
\end{equation}
is a candidate for $\omega$.  If there are no such elements,
equation~\eqref{eq:gso} does
not have an exponential solution and the algorithm stops here.

\noindent
\textbf{Step III.} For each family from step II that gives 
$d\in\N_0$ we  search for a monic polynomial $P=P(z)$
of degree $d$ satisfying the
following equation
\begin{equation}
\label{eq:P1}
P'' + 2\omega(z)P' +( \omega'(z)+ \omega(z)^2 -r(z))P =0.
\end{equation}
If such polynomial exists, then equation~\eqref{eq:gso} possesses an
exponential solution of the form $y=P\exp\int\theta$, where $\theta=\omega$,
if not,
equation~\eqref{eq:gso} does not have an exponential solution.
\\[\bigskipamount]
\noindent
\textsc{Case II}

Let $\Gamma$ be the set of poles of $r$.

\noindent
\textbf{Step I.}
For each $c\in\Gamma$ we define $E_c$ as follows
\begin{description}
\item[(c$_1$)] If $\ord(c)=1,$ then the set $E_{c}=\{4\}$ i
\item[(c$_2$)] If $\ord{c}=2$ and $r$ has the  expansion of the form
\eqref{eq:r},
then
\begin{equation}
  E_c := \left\{ 2 , 2\left(1+\sqrt{1+4a}\right),
   2\left(1-\sqrt{1+4a}\right)\right\}\cap\Z,
\label{eq:del2s}
\end{equation}
\item[(c$_3$)] If $\ord{c}=\nu>2$, then $E_{c}=\{\nu\}$
\item[($\infty_{1}$)] If $\ord{\infty}>2$, then $E_{\infty}=\{1,2,4\}$
\item[($\infty_{2}$)] If $\ord{\infty}=2$ and the he Laurent series expansion of
$r$ at $\infty$ takes the form \eqref{eq:Linfty}, then
\begin{equation}
  E_\infty := \left\{2,  2\left(1+\sqrt{1+4a}\right),
  2\left(1-\sqrt{1+4a}\right)\right\}\cap\Z.
\label{eq:del2i}
\end{equation}
\item[($\infty_{3}$)] If $\ord{\infty}=\nu<2$, then $E_{\infty}=\{\nu\}$
\end{description}

\noindent
\textbf{Step II.} We consider all families $(e_c)_{c\in\Gamma\cup\{\infty\}}$
with $e_c\in E_c$ and at least one of the coordinates is odd. 
We compute
\[
  d := \frac{1}{2}\left(e_\infty- \sum_{c\in\Gamma}e_c\right).
\]
and  select those  families $(e_c)_{c\in\Gamma\cup\{\infty\}}$ for which $d(e)$
is a non-negative
integer.  If there are no such elements,  Case II cannot occur and the
algorithm stops here.

\noindent
\textbf{Step III.} For each family  giving
$d\in\N_0$ we define
\begin{equation}
\omega=\omega(z) = \frac{1}{2}
\sum_{c\in\Gamma}\frac{e_c}{z-c},
\label{eq:t2}
\end{equation}
and we search for a monic polynomial $P=P(z)$ of degree $d$ satisfying the
following equation
\begin{equation}
P''' + 3\omega P'' +(3 \omega^2 + 3\omega' -4r)P' +
(\omega'' + 3 \omega\omega' + \omega^3 -4 r\omega - 2r')P =0.
\label{eq:P2}
\end{equation}
If such a polynomial exists, then equation~\eqref{eq:gso} possesses a
 solution of the form $y=\exp\int\theta$, where
\[
\theta^2 - \psi\theta +\frac{1}{2}\psi' + \frac{1}{2}\psi^2 - r =0, \qquad
\psi = \omega + \frac{P'}{P}.
\]
If we do not find such  polynomial, then Case II in Lemma~\ref{lem:alg}
cannot occur.

\noindent
\textsc{Case III}

\noindent
\textbf{Step I.} For each $c\in\Gamma\cup\{\infty\}$ we define a set $E_c$ of
integers as follows
\begin{description}
\item[($c_1$)] If $c$ is a pole of order 1, then $E_2=\{12\}$.

\item[($c_2$)] If $\ord c=2$ and $r$ has the  expansion of the form
\eqref{eq:r},
then
\begin{equation}
E_c=\left\{6+\dfrac{12k}{m}\sqrt{1+4a}\,|\, k=0,\pm1,\pm 2\ldots,\pm
\frac{m}{2}\right\}\cap\Z.
\label{eq:del3s}
\end{equation}
Here and below in this case $m\in\{4,6,12\}$.
%
%
\item[($\infty$)] For $c=\infty$ when the Laurent expansion of $r$
at infinity is of the form \eqref{eq:Linfty} we define independently on order 
\begin{equation}
E_{\infty}=\left\{6+\dfrac{12k}{m}\sqrt{1+4a}\,|\, k=0,\pm1,\pm 2\ldots,\pm
\frac{m}{2}\right\}\cap\Z.
\label{eq:del3i}
\end{equation}
\normalmarginpar
Obviously it can appear that $a_{\infty}=0$.
\end{description}
\noindent
\textbf{Step II.} For $e\in E$ we calculate
\[
  d(e) := \frac{m}{12}\left(e_\infty- \sum_{c\in\Gamma}e_c\right).
\]
We select those elements $e\in E$ for which $d(e)$ is a non-negative
integer.  If there are no such elements,  Case II cannot occur and the
algorithm stops here.

\noindent
\textbf{Step III.} We consider  all families $(e_c)_{c\in\Gamma\cup\{\infty\}}$
with $e_c\in E_c$ that give
$d(e)=n\in\N_0$ . For each such family we define
\begin{equation*}
\omega=\omega(z) = \frac{m}{12}
\sum_{c\in\Gamma}\frac{e_c}{z-c}.
\end{equation*}
Next we search for a monic polynomial $P=P(z)$ of degree $n$
satisfying a differential equation of degree $m+1$ defined in the
following way. Put $P_m =P$. Then calculate $P_i$ for
$i=m,m-1,\ldots,0$, according to the following formula 
\begin{equation}
P_{i-1}=-SP_i'+[(m-i)S'-S\omega]P_i-(m-i)(i+1)S^2rP_{i+1},
\label{eq:P3}
\end{equation}
where
\[
S=\prod_{c\in\Sigma'}(x-c).
\]
Then $P_{-1}=0$ gives the desired equation for $P$. 
If such  polynomial exists, then equation~\eqref{eq:gso} possesses a
 solution of the form $y=\exp\int\theta$, where $\theta$ is a solution of
 the equation
\[
\sum_{i=0}^n\frac{S^iP_i}{(m-i)!}\theta^i=0.
\]
If we do not find such polynomial, then we repeat these calculations
for the next element $m\in\{4,6,12\}$. If for all $m$ such polynomial does
not exist, then Case III in Lemma~\ref{lem:alg} cannot occur.
\newcommand{\noopsort}[1]{}\def\polhk#1{\setbox0=\hbox{#1}{\ooalign{\hidewidth
  \lower1.5ex\hbox{`}\hidewidth\crcr\unhbox0}}} \def\cprime{$'$}
  \def\cydot{\leavevmode\raise.4ex\hbox{.}} \def\cprime{$'$} \def\cprime{$'$}
  \def\cprime{$'$} \def\cprime{$'$}

\end{document}